\newtheorem{assumption}{Assumption}
\newtheorem{theorem}{Theorem}
\newtheorem{remark}{Remark}
\title{Non-Parametric Neuro-Adaptive \textcolor{black}{Formation} Control}
\author{Christos K. Verginis, Zhe Xu, and Ufuk Topcu% <-this % stops a space
	%\thanks{This work was supported by ...}% <-this % stops a space
\thanks{C. K. Verginis is with Uppsala University, Uppsala, Sweden. E-mail:christos.verginis@angstrom.uu.se. 
Z. Xu is with  Arizona State University, Tempe, Arizona, USA. E-mail: xzhe1@asu.edu.
 U. Topcu is with the University of Texas at Austin, Austin, Texas, USA. E-mail:utopcu@utexas.edu.
		%and Computer Engineering, Oden Institute for Computational Engineering
		%and Sciences, and the Department of Aerospace Engineering and Engineering
		%Mechanics at 
		%the 
		%University of Texas at Austin, Austin, TX, USA. Email:
		%{\tt\small \{cverginis,fdjeumou, utopcu\}.utexas.edu}.
	}%
}
\begin{document}

	\maketitle

%%% Use this environment to specify a short abstract for your paper.

\begin{abstract}
We develop a learning-based algorithm for the distributed formation control
of networked multi-agent systems governed by unknown, nonlinear dynamics. 
Most existing algorithms either assume certain parametric forms for the unknown dynamic terms or resort to unnecessarily large control inputs in order to provide theoretical guarantees. 
The proposed algorithm avoids these drawbacks by integrating neural network-based learning with adaptive control in a two-step procedure. 
In the first step of the algorithm, each agent learns a controller, represented as a neural network, using training data that correspond to a collection of formation tasks and agent parameters. 
{These parameters and tasks are derived by varying the nominal agent parameters and a user-defined formation task to be achieved, respectively.
In the second step of the algorithm, each agent incorporates the trained neural network into an online and adaptive control policy in such a way that the behavior of the multi-agent closed-loop system satisfies the user-defined formation task.} Both the learning phase and the adaptive control policy are distributed, in the sense that each agent computes its own actions using only local information from its neighboring agents. 
%{We model these neighboring agents using a communication graph, and we consider both cases of undirected and directed graph topology}. 
The proposed algorithm does not use any a priori information on the agents' unknown dynamic terms or any approximation schemes. We provide formal theoretical guarantees on the achievement of the formation task.
% and numerical experiments on a team of aerial vehicles that demonstrate that the proposed algorithm outperforms control policies that do not employ online adaptation or the neural-network controller. 
%We provide formal theoretical guarantees on the satisfaction of the task and we demonstrate the effectiveness of the algorithm via extensive computer simulations. 
\end{abstract}

%%%%%%%%%%%%%%%%%%%%%%%%%%%%%%%%%%%%%%%%%%%%%%%%%%%%%%%%%%%%%%%%%%%%%%%%

\section{Introduction} \label{sec:intro}

During the last decades, decentralized control of networked multi-agent systems has attracted significant attention due to the great variety of its applications, including multi-robot systems, transportation, multi-point surveillance as well as biological systems \cite{jadbabaie2003coordination,olfati2007consensus,couzin2005effective}. 
{In such systems, each agent calculates its own actions based on local information, as modeled by a connectivity graph, without relying on any central control unit. 
%the interaction of the agents is modeled by a connectivity graph, and each agent
%In decentralized schemes, where each agent
%calculates its own actions based on local information, 
This absence of central control and global information motivates leader-follower architectures, where %In systems with such architecture, 
a team of agents (followers) aims at following a pre-assigned leader agent  that holds information about the execution of a potential task.}
The coordination problem of leader–follower architectures has been the
focus of many works \cite{modares2017optimal,bechlioulis2016decentralized,verginis2019adaptive,ni2020adaptive,zhang2012adaptive,hu2014adaptive}  because of its numerous applications in various disciplines including autonomous vehicles coordination (satellite formation flying, cooperative search of unmanned aerial vehicles and synchronization of Euler–Lagrange systems), systems biology (control and synchronization in cellular networks), and power systems (control of renewable energy microgrids).

Although many works on distributed cooperative control consider known and simple dynamic models, there exist many practical engineering systems that cannot be modeled accurately and 
are affected by unknown exogenous disturbances. Thus, the design of control algorithms that are robust and adaptable to such uncertainties and disturbances is important. For multi-agent systems, ensuring robustness is particularly challenging due to the lack of global information and the interacting dynamics of the individual agents. {A promising step towards the control of systems with uncertain dynamics is the use of data obtained a priori from system runs. However, engineering systems often undergo purposeful modifications (e.g., substitution of a motor or link in a robotic arm or exposure to new working environments) or suffer gradual faults (e.g., mechanical degradation), which might change the systems' dynamics or operating conditions. Therefore, one cannot rely on the aforementioned data to provably guarantee the successful control of the system. %the accomplishment of the tasks with formal guarantees. 
On the other hand, the exact incorporation of these changes in the dynamic model, %and their incorporation in the dynamics, s
and consequently, the design of new model-based algorithms, can be a challenging and often impossible procedure. Hence, the goal in such cases is to exploit the data obtained a priori and construct intelligent online policies that achieve a user-defined task while
adapting to the aforementioned changes. 
 }

\subsection{Contributions}

{
This paper addresses the distributed coordination of networked multi-agent systems governed by unknown nonlinear dynamics. Our main contribution lies in the development of a distributed learning-based control algorithm
that provably \textit{guarantees} the accomplishment of a given multi-agent formation task without any a priori information on the underlying dynamics. The algorithm draws a novel connection between distributed learning with neural-network-based representations and adaptive feedback control, and consists of the following steps. 
Firstly, it trains a number of neural networks, one for each agent, to approximate controllers for the agents that accomplish the given formation task. The data used to train the neural networks consist of pairs of states and control actions of the agents that are gathered from runs of the multi-agent system. 
Secondly, it uses an online adaptive feedback control policy that guarantees accomplishment of the given formation task. 
Both steps can be executed in a distributed manner in a sense that each agent uses only local information, as modeled by a connectivity graph. 
%The proposed algorithm is particularly suitable for cases when engineering systems undergo purposeful modifications (e.g., the substitution of a motor/link in a robotic arm or exposure to new working environments) which might change their dynamics or operating conditions. In such cases, the goal is not to re-design new model-based controllers for the modified systems, but rather exploit the already designed ones and guarantee correctness via intelligent online adaptation policies. 
Our approach builds on a combination of controllers trained off-line and on-line adaptations, which was recently shown to significantly enhance performance with respect to single use of the off-line part \cite{bertsekas2021lessons}.
Numerical experiments show the robustness and adaptability of the proposed algorithm to different formation tasks, interactions among the agents, and system dynamics. That is, the proposed algorithm is able to achieve the given formation task even when the neural networks are trained with data that correspond to different multi-agent dynamic models (resembling a change in the dynamics of the agents), as well as different formation tasks and interactions among the agents. 
This paper extends our preliminary version \cite{verginisAAMAS} by providing (1) formal guarantees on the theoretical correctness of the proposed algorithm, and (2) a larger variety of experimental results.  
}

\subsection{{Related Work}}

\noindent{
\textbf{{Robust and adaptive control:}}
A large class of works on multi-agent coordination with uncertain dynamics falls in the category of robust and adaptive control \cite{chen2016adaptive,wang2017distributed,liu2017adaptive,
rezaee2017adaptive,yang2018robust,
rahimi2019distributed,li2014distributed,
verginis2019adaptive,verginis2021adaptive,
wang2015robust, lu2016cooperative,munz2010robust}. 
Standard adaptive-control methodologies, however, assume certain linear parametric forms for the unknown terms of the dynamics, limiting the dynamic uncertainties to unknown \textit{constant} terms \cite{chen2016adaptive,wang2017distributed,liu2017adaptive,rezaee2017adaptive}. Additionally, many works that do not employ parametric assumptions consider dynamic uncertainties and disturbances that are uniformly bounded \cite{yang2018robust,rahimi2019distributed} or satisfy growth conditions \cite{li2014distributed,verginis2019adaptive,verginis2021adaptive}. The works \cite{wang2015robust,lu2016cooperative} use functions in the control design that are larger than the upper bounds of the unknown dynamic terms; such a condition requires some a priori information on these terms. The work \cite{munz2010robust} assumes that the unknown drift terms of the dynamics are passive, which is then exploited in the stability analysis.  }
%In contrast to the aforementioned works, we assume that the unknown dynamic terms satisfy mild locally Lipcshitz properties, without any a priori boundedness or passivity conditions. 
{Multi-agent coordination with unknown nonlinear continuous dynamics has been also tackled in the literature by using the so-called funnel control, without using dynamic approximations \cite{bechlioulis2015robust,bechlioulis2016decentralized,verginis2017robust,verginis2019robust}. Nevertheless, funnel controllers depend on so-called reciprocal time-varying barrier functions that drive the control input unbounded
when the error approaches a pre-specified funnel, creating thus unnecessarily large control inputs that cannot be realized by the system's actuators. In this paper, we develop a distributed control algorithm that does not employ such reciprocal terms and whose correctness does not rely on any of the aforementioned assumptions. }

\noindent{
\textbf{{Learning-based control:}}
A large variety of works focuses on distributed learning-based control to achieve multi-agent coordination under uncertain dynamics \cite{liu2020neural,qin2019neural,peng2013distributed,
wen2016neural,cheng2010neural,yuan2018cooperative,
mei2014distributed}. Such works resort to neural-network approximations of the unknown dynamic terms. In particular, they assume that the unknown functions of the dynamics are approximated arbitrarily well as a single-layer neural network with \textit{known} radial-basis activation functions and a vector of unknown but \textit{constant} weights. However, the accuracy of such approximations depends on the size of that vector, i.e., the number of neural-network neurons, implying that an arbitrarily small approximation error might require arbitrarily many weights. Additionally, 
there are no guidelines for choosing the activation functions in practice. }
{
Multi-agent coordination with unknown dynamics has also been tackled via cooperative reinforcement learning with stochastic processes \cite{zhang2021multi,omidshafiei2017deep,
hernandez2019survey,foerster2017stabilising,
gupta2017cooperative,
zhang2020distributed,dall2013distributed,
kar2012qd,wang2002reinforcement,
wai2018multi,doan2019finite}. However, such works usually adopt the conservative assumption that the agents have access to the states and actions of all other agents in the learning, execution, or both phases \cite{foerster2017stabilising,gupta2017cooperative}. Moreover, these works exhibit scalability problems with respect to the number of agents \cite{hernandez2019survey}, or assume the availability of time or state discretizations of the underlying continuous-time and continuous-state models. Additionally, the related works on multi-agent cooperative reinforcement learning usually consider common or team-average reward functions for the agents \cite{zhang2021multi,dall2013distributed}, which cannot be easily extended to account for inter-agent formation specifications that we account for. When relative inter-agent formation specifications are considered, the environment becomes non-stationary creating problems in the theoretical convergence analysis \cite{zhang2021multi}.}

{
In this work, we develop a distributed neuro-adaptive control algorithm for the formation control of continuous-time and -state multi-agent systems with unknown nonlinear dynamics.
In contrast to the related works in the literature, we do not assume linear parametrizations \cite{chen2016adaptive,wang2017distributed}, neural-network approximations \cite{liu2020neural,qin2019neural}, global boundedness or growth conditions \cite{yang2018robust,li2014distributed,verginis2019adaptive}, passivity properties \cite{munz2010robust}, or known upper bounds \cite{wang2015robust,lu2016cooperative} for the unknown dynamic terms. 
According to the best of our knowledge, the distributed formation-control problem with unknown dynamics has not been solved in the absence of the aforementioned assumptions. 
}

%Additionally, reinforcement learning methodologies with deep neural networks usually fail to provide theoretical guarantees \cite{}. 
%In this paper, we consider continuous- time and -state networked multi-agent systems, whose interaction is modeled by a communication graph.

The rest of the paper is organized as follows. Section \ref{sec:PF} describes the considered problem. We provide our theoretical results in Section \ref{sec:main res}, and Section \ref{sec:exps} verifies the proposed methodology through experimental evaluation. Finally, Section \ref{sec:concl} concludes the paper.

\section{Problem Formulation} \label{sec:PF}

Consider a networked multi-agent group comprised of a leader, indexed by $i=0$, and $N$ followers, with $\mathcal{N}\coloneqq\{1,\dots,N\}$. The leading agent acts as an exosystem that generates a desired command/reference trajectory for the multi-agent group. The followers, which have to be controlled,
evolve according to the second-order dynamics
\begin{subequations} \label{eq:dynamics}	
\begin{align}
	\dot{x}_{i,1} &= x_{i,2} \\
	\dot{x}_{i,2} &= f_i(x_i,t) + g_i({x}_i,t)u_i
\end{align}
\end{subequations}
where ${x}_i \coloneqq [x_{i,1}^\top, x_{i,2}^\top]^\top \in \mathbb{R}^{n}\times \mathbb{R}^n$ is the $i$th agent's state, assumed available for measurement by agent $i$, $f_i:\mathbb{R}^{2n}\times[0,\infty) \to \mathbb{R}^n$, $g_i:\mathbb{R}^{2n}\times[0,\infty) \to \mathbb{R}^n$ are unknown functions modeling the agent's dynamics, and $u_i$ is the $i$th agent's control input. The vector fields $f_i(\cdot)$ and $g_i(\cdot)$ are assumed to be locally Lipschitz in ${x}_i$ over $\mathbb{R}^{2n}$ for each fixed $t\geq 0$, and uniformly bounded in $t$ over $[t_0,\infty)$ for each fixed ${x}_i\in\mathbb{R}^{2n}$, for all $i\in\mathcal{N}$. 
{In contrast to the works of the related literature, we do not assume any knowledge of the structure, Lipschitz constants, or bounds of $f_i(\cdot)$ and $g_i(\cdot)$, and we do not use any scheme to approximate them.  The lack of such assumptions renders the multi-agent coordination problem significantly difficult, since there is no apparent way to counteract the effect of the unknown drift terms $f_i()$. Moreover, in contrast to the funnel-based schemes, we do not resort to the use of reciprocal-like terms to dominate $f_i()$. }
Nevertheless, we do require the following assumption on the control directions $g_i(\cdot)$: 
\begin{assumption} \label{ass:g pd}
The matrices $g_i({x}_i,t)$ are positive definite, for all ${x}_i \in \Omega_i$, $t \geq 0$, where $\Omega_i\subset \mathbb{R}^{2n}$ are compact sets, $i\in\mathcal{N}$.
	%The matrices $g_i({x}_i,t)$ are positive definite, for all $({x}_i,t)\in \mathbb{R}^{2n} \times [0,\infty)$. 
\end{assumption}
Assumption \ref{ass:g pd} is a sufficiently controllability condition for \eqref{eq:dynamics} and is adopted in numerous related works (e.g., \cite{wen2016neural,bechlioulis2016decentralized,verginis2017robust,huang2006nonlinear}).
The dynamics \eqref{eq:dynamics}, subject to Assumption \ref{ass:g pd}, comprise a large class of nonlinear dynamical systems that capture contemporary engineering problems in mechanical, electromechanical and power electronics applications, such as rigid/flexible robots, induction motors and DC-to-DC converters, to name a few.
Systems not covered by (\ref{eq:dynamics}) or Assumption \ref{ass:g pd} consist of underactuated or non-holonomic systems, such as unicycle robots, underactuated aerial or underwater vehicles. Such systems require special attention and their study consist part of our future work.  %The continuity in time and state provides a direct link to the actual underlying systems, and we further do not require any time or state discretizations.  Moreover, 
Finally, the second-order model \eqref{eq:dynamics} can be easily extended to account for higher-order integrator systems \cite{slotine1991applied}.

We use an undirected graph $\mathcal{G} \coloneqq (\mathcal{N},\mathcal{E})$ to model the communication among the agents, with $\mathcal{N}$ being the index set of the agents, and $\mathcal{E} \subseteq \mathcal{N}\times\mathcal{N}$ being the respective edge set, with $(i,i)\notin \mathcal{E}$ (i.e., simple graph). The adjacency matrix associated with the graph $\mathcal{G}$ is denoted by $\mathcal{A} \coloneqq [a_{ij}]\in\mathbb{R}^{N\times N}$, with $a_{ij} \in \{0,1\}$, $i,j\in\{1,\dots,N\}$. If $a_{ij} = 1$, then agent $i$ obtains information regarding the state ${x}_j$ of agent $j$ (i.e., $(i,j)\in\mathcal{E}$), whereas if $a_{ij} = 0$ then there is no state-information flow from agent $j$ to agent $i$ (i.e., $(i,j)\notin\mathcal{E}$). Furthermore, the set of neighbors of agent $i$ is denoted by $\mathcal{N}_i \coloneqq \{j\in\mathcal{N}:(i,j)\in\mathcal{E}\}$, and the degree matrix is defined as $\mathcal{D} \coloneqq \textup{diag}\{|\mathcal{N}_1|,\dots,|\mathcal{N}_N|\}$. Since the graph is undirected, the adjacency is a mutual relation, i.e., $a_{ij} = a_{ji}$, rendering $\mathcal{A}$ symmetric. The \textit{Laplacian} matrix of the graph is defined as $\mathcal{L} \coloneqq \mathcal{D} - \mathcal{A}$ and is also symmetric. The graph is \textit{connected} if there exists a path between any two agents. For a connected graph, it holds that $\mathcal{L}\bar{1} = 0$, where $\bar{1}$ is the vector of ones of appropriate dimension.

Regarding the leader agent,  we denote its state variables by $x_0 \coloneqq [x_{0,1}^\top, x_{0,2}]^\top$ $\in\mathbb{R}^{2n}$, and consider the second-order dynamics
\begin{align*}
	\dot{x}_{0,1}(t) &= x_{0,2}(t) \\
	\dot{x}_{0,2}(t) &= u_{0}(t)
\end{align*} 
where $u_0:[0,\infty) \to \mathbb{R}^n$ is a bounded command signal. However, the leader provides its state only to a subgroup of the $N$ agents. In particular, we model the access of the follower agents to the leader's state via a diagonal matrix $\mathcal{B} \coloneqq \textup{diag}\{b_1,\dots,b_N\} \in \mathbb{R}^{N\times N}$; if $b_i = 1$, then the $i$th agent has access to the leader's state, whereas it does not if $b_i = 0$, for $i\in\mathcal{N}$. Thus, we  also define the augmented graph as $\bar{\mathcal{G}} \coloneqq (\mathcal{N}\cup\{0\}, \bar{\mathcal{E}})$, where $\bar{\mathcal{E}} \coloneqq \mathcal{E} \cup \{ (0,i) : b_i = 1 \}$.  We further define 
$$ H\coloneqq (\mathcal{L} + \mathcal{B})\otimes I_n,$$
where $\otimes$ denotes the Kronecker product, 
as well as the stacked vector terms 
\begin{align*}
	{x}_1 &\coloneqq [x_{1,1}^\top,\dots,x_{N,1}^\top]^\top \in \mathbb{R}^{Nn} \\
	 {x}_2 &\coloneqq [x_{1,2}^\top,\dots,x_{N,2}^\top]^\top \in \mathbb{R}^{Nn} \\
	 {x} &\coloneqq [{x}_1^\top,\dots,{x}_N^\top]^\top \in \mathbb{R}^{2Nn} \\
	 \bar{x}_{0,1} &\coloneqq [x_{0,1}^\top,\dots,x_{0,1}^\top]^\top \in \mathbb{R}^{Nn} \\
	 \bar{x}_{0,2} &\coloneqq [x_{0,2}^\top,\dots,x_{0,2}^\top]^\top \in \mathbb{R}^{Nn} \\
	 \bar{x}_0 &\coloneqq [\bar{x}_{0,1}^\top,\bar{x}_{0,2}^\top]^\top \in \mathbb{R}^{2Nn}. 
\end{align*}
By further defining 
\begin{align*}
	{f}({x},t) &\coloneqq [f_1(x_1,t)^\top,\dots,f_N(x_N,t)^\top]^\top\in\mathbb{R}^{Nn}\\ {g}({x},t) &\coloneqq \textup{diag}\{g_1(x_1,t),\dots, g_N(x_N,t)\} \in\mathbb{R}^{Nn\times Nn}, \\
	{u} &\coloneqq [u_1^\top, \dots, u_N^\top]^\top \in \mathbb{R}^{Nn},
\end{align*}
the dynamics \eqref{eq:dynamics} can be written as 
\begin{subequations} \label{eq:dynamics stack}
\begin{align} 
	\dot{{x}}_1 &= {x}_2 \\
	\dot{{x}}_2 &= {f}({x},t) + {g}({x}(t),t){u}.
\end{align}
\end{subequations}
The goal of this work is to design a distributed control algorithm, where each agent has access only to its neighbors' information, to achieve a pre-specified geometric formation of the agents in $\mathbb{R}^n$. More specifically, consider for each agent $i\in\mathcal{N}$ the constants $c_{ij}$, $j\in \{0\} \cup \mathcal{N}_i$ prescribing a desired offset that agent $i$ desires to achieve with respect to the leader ($j=0$), and its neighbors ($j\in\mathcal{N}_i$). That is, each agent $i\in\mathcal{N}_i$ aims at achieving $x_{i,1} = x_{j,1} - c_{ij}$, for all $j\in\mathcal{N}_i$, and if $b_i=1$ (i.e., the agent obtains information from the leader), $x_{i,1} = x_{0,1} - c_{i0}$. 
Note that, in the case of undirected graph, $c_{ij} = -c_{ji}$, for all $(i,j)\in \mathcal{E}$, and we assume that the set 
\begin{align*}
	\{ {x}_1\in\mathbb{R}^{Nn} :& x_{i,1} - x_{j,1} + c_{ij} = 0, \forall (i,j)\in\mathcal{E}, \\
	& b_i(x_{i,1} - x_{0,1} + c_{i0}) = 0, \forall i\in\mathcal{N} \}
\end{align*} is non-empty in order for the formation specification to be feasible. 

Furthermore, we impose the following assumption on the graph connectivity:
\begin{assumption} \label{ass:graph connect}
	The graph $\mathcal{G}$ is connected and there exists at least one $i\in\mathcal{N}$ such that $b_i =1$.
\end{assumption}
The aforementioned assumption dictates that
$\mathcal{L} + \mathcal{B}$ is an irreducibly diagonally dominant M-matrix [15].
An M-matrix is a square matrix having its off-diagonal entries non-positive and all principal minors nonnegative, thus $\mathcal{L} + \mathcal{B}$ is positive definite [15]. 

We define now the error variables for each agent as
\begin{align} \label{eq:e error}
	e_{i,1} \coloneqq \sum_{j\in\mathcal{N}_i} (x_{i,1} - x_{j,1} + c_{ij}) + b_i(x_{i,1} - x_{0,1} + c_{i0})
\end{align}
for $i\in\mathcal{N}$, and the respective stack vector $${e}_1 \coloneqq [e_{1,1}^\top,\dots,e_{N,1}^\top]^\top.$$ 
Next, by employing the multi-agent graph properties, noticing that $(\mathcal{L} \otimes I_n) \bar{x}_{0,1} = 0$ and that $(\mathcal{L} + \mathcal{B})$ is positive definite, \eqref{eq:e error} can be written as 
\begin{align} \label{eq:e error stack}
	{e}_1 \coloneqq H({x}_1 - \bar{x}_{0,1} + \bar{c}),
\end{align}
where 
\begin{align} \label{eq:c bar}
	{{c} \coloneqq \begin{bmatrix}
		c_1 \\ \vdots \\
		c_N
	\end{bmatrix} \coloneqq
H^{-1} 
	\begin{bmatrix}
		\sum_{j\in\mathcal{N}_1} c_{1j} + b_1c_{10} \\
		\vdots \\
		\sum_{j\in\mathcal{N}_N} c_{Nj} + b_Nc_{N0} 
	\end{bmatrix}}	
\end{align}
stacks the relative desired offsets $c_i$ of the $i$th agent
with respect to the leader, as dictated by the desired formation
 specification.  In this way, the desired formation is expressed with
 respect to the leader state, and is thus achieved when the state
 $x_{i,1}$ of each agent approaches the leader state $x_{0,1}$ with the
 corresponding offset $c_i$, $i \in\mathcal{N}$. Therefore, the formation control problem is solved if the control algorithm drives the disagreement vector
\begin{align}
	{\delta}_1 \coloneqq \begin{bmatrix}
		\delta_{1,1} \\ \vdots \\ \delta_{N,1}
	\end{bmatrix} 
	\coloneqq {x}_1 - \bar{x}_{0,1} + {c} 
\end{align}
to zero. However,
the disagreement formation variables $\delta_{i,1}$, are
global quantities and thus cannot be measured distributively by each agent
based on the local measurements, as they involve information directly from the leader as well as from the
whole graph topology via employing the inverse of $\mathcal{L} + \mathcal{B}$
in \eqref{eq:c bar}. Nevertheless, from \eqref{eq:e error stack}, one obtains
\begin{align} \label{eq:delta epsilon ineq}
	\left\|{\delta}_1 \right\| \leq \frac{\left\|{e}_1\right\|}{\sigma_{\min}(H)}
\end{align}
where $\sigma_{\min}(\cdot)$ denotes the minimum singular value. Therefore, convergence of ${e}_1$ to zero, which we aim to guarantee, implies convergence of ${\delta}_1$ to zero. {We further define the augmented errors for each agent 
\begin{align}\label{eq:e_2}
	e_{i,2} \coloneqq \dot{e}_{i,1} + k_{i,1}e_{i,1}
\end{align}
where $k_{i,1}$ are positive constants, for all $i\in\mathcal{N}$, the respective stacked vector 
\begin{align*}
 	e_2 \coloneqq [e_{i,2}^\top,\dots,e_{N,2}^\top]^\top \in \mathbb{R}^{nN}
\end{align*}
and the total error vector $e \coloneqq [e_1^\top,e_2^\top]^\top$.
By using \eqref{eq:e error stack}, the total error dynamics can be written as 
\begin{subequations} \label{eq:e dynamics}
\begin{align} 
 \dot{e}_1 &= -K_1 e_1 + e_2 \\
 \dot{e}_2 &= H( f(x(e),t) + g(x(e),t)u - \ddot{\bar{x}}_{0,1}) - K_1^2e_1 + K_1e_2,
\end{align}
\end{subequations}
where $K_1 \coloneqq \textup{diag}\{k_{i,1},\dots,k_{N,1}\}  \otimes I_n$ and with a slight abuse of notation, we express $x$ as a function of $e$ through \eqref{eq:e error stack}.
}

Before proceeding, we define the tuple 
\begin{align} \label{eq:F instance}
	\mathcal{F} \coloneqq (x_0(t),{f},{g},{c},\bar{\mathcal{G}},{x}(0))
\end{align} 
as the ``formation instance", characterized by the leader profile, the agent dynamics, the desired formation offsets, the graph topology, and the initial conditions of the agents.

\begin{figure*}
	\centering
	\includegraphics[width=.9\textwidth]{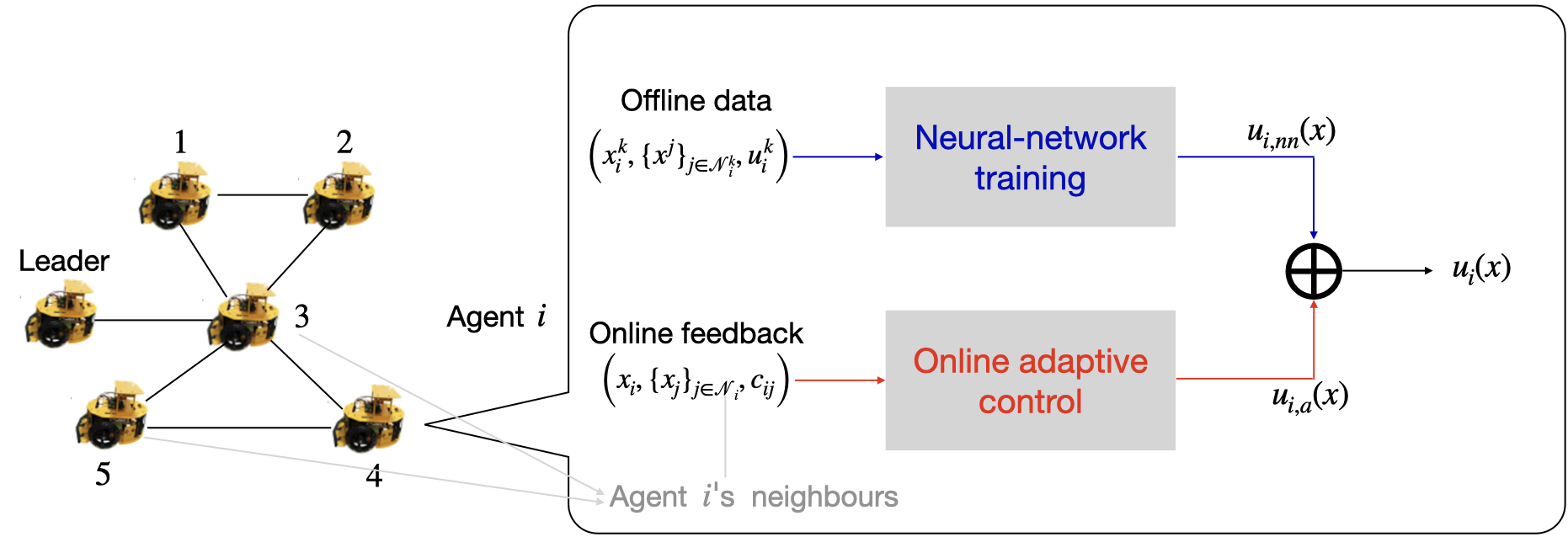}
	\caption{Diagram of the proposed two-step control algorithm. In the first, step, each agent learns a control policy $u_{i,\textup{nn}}(x)$ by training a neural network on the offline data $(x_i^k, \{x^j\}_{j\in \mathcal{N}_i^k}, u_i^k)$. In the second step, each agent uses an online feedback control policy that combines the trained neural network and an adaptive controller. }
	\label{fig:diagram}
\end{figure*}

\section{Main Results} \label{sec:main res}
This section describes the proposed algorithm, which consists of two steps. The first step consists of off-line learning of
distributed controllers, represented as neural networks, using training data derived from runs of the multi-agent system.
%that might correspond to a collection of different tasks (in the sense of different $c_{ij}$), robot parameters (in the functions $f_i(\cdot)$, $g_i(\cdot)$), and different network topologies. 
In the second step, we design an adaptive feedback
control policy that uses the neural networks and provably guarantees achievement of the formation specification.
{The proposed algorithm is depicted in Fig. \ref{fig:diagram}.}

\subsection{Neural-network learning} \label{subsec:NN}

As discussed in Section \ref{sec:intro}, we are inspired by cases where systems undergo changes that modify their dynamics and hence the underlying controllers no longer guarantee the satisfaction of a specific task. In such cases, instead of carrying out the challenging and tedious procedure of identification of the new dynamic models and design of new model-based controllers, we aim to exploit data from off-line system trajectories and develop a distributed online policy that is able to adapt to the aforementioned changes and achieve the formation task expressed via the offsets $c_{ij}$, $(i,j)\in\bar{\mathcal{E}}$. Consequently,
we assume the existence of data gathered from a finite set of $T$ trajectories $\mathcal{J}$ generated by a priori runs of the multi-agent system. %that satisfy a collection of formation tasks. %including the one modeled by $\varphi$, and possibly produced by systems with different dynamic parameters. 
%The data from each trajectory $k\in\{1,\dots,T\}$ is associated with a multi-agent formation task, prescribed by the desired relative offsets $c_{ij}$, a leader agent $x_0$ and its profile $u_0(t)$, and a network graph topology $\mathcal{G}$. 
%We stress that these attributes, as well as the dynamic parameters of the agents (in the functions $f_i(\cdot)$, $g_i(\cdot)$ - see \eqref{eq:dynamics}) \textit{differ} among the trajectories. 
More specifically, we consider that $\mathcal{J}$ is decomposed as $\mathcal{J} = (\mathcal{J}_1,\dots,\mathcal{J}_N)$, where $\mathcal{J}_i$ is the set of trajectories of agent $i\in\mathcal{N}$. 
%Note that, although the agents $1,\dots,N$ are considered to be followers in the formation task at hand (see Section \ref{sec:PF}), they might have acted as the leading agent in some of the trajectories $\mathcal{J}$ - and reversely, agent $0$ might have acted as a follower agent in $T$. 
Since the proposed control scheme is distributed, we consider that each agent $i$ has access to the data from its own set of trajectories $\mathcal{J}_i$, which comprises  
%More specifically, that data from each trajectory comprise 
the finite set $$\mathcal{J}_i = \left\{{x}^k_i(t), \{{x}^j\}_{j\in\mathcal{N}^k_i},u^k_i \left({x}^k_i(t), \{{x}^j\}_{j\in\mathcal{N}^k_i},t \right) \right\}_{t\in \mathbb{T}_i}$$
where $\mathbb{T}_i$ is a finite set of time instants, ${x}^k_i\in\mathbb{R}^{2n}$ is the state trajectory of agent $i$ for trajectory $k$, $\mathcal{N}^k_i$ are the neighbors of agent $i$ in trajectory $k$, with $\{{x}^j\}_{j\in\mathcal{N}^k_i}$ being their respective state trajectories (which agent $i$ has access to, being their neighbor), and $u^k_i({x}^k_i(t), \{{x}^j\}_{j\in\mathcal{N}^k_i},t) \in \mathbb{R}^n$ is the  control input trajectory of agent $i$, which is a function of time and of its own and its neighbors' states. 
%Note that the agents' state and control input trajectories  are compliant with the dynamics \eqref{eq:dynamics}. 
%Note that the trajectories $T_i$ do not include information regarding the neighbors agent $i$ had at the respective task; however, given the cooperative manner of the formation tasks, the respective neighboring interactions affect implicitly the triplet $\{\bar{x}^s_i(t),t,u^s_i(t)\}_{t\in \mathbb{T}_i}$. 

Each agent $i\in\mathcal{N}$ uses the data to train a neural network in order to approximate a controller that accomplishes the formation task. 
More specifically, each agent uses the tuples  $\{{x}^k_i(t), \{{x}^j\}_{j\in\mathcal{N}^k_i}\}_{t\in\mathbb{T}_i}$ as input to a neural network, and $u^k_i \big({x}^k_i(t), \{{x}^j\}_{j\in\mathcal{N}^k_i},t \big)_{t\in\mathbb{T}_i}$ as the respective output targets, for all $T$ trajectories. 
For the inputs corresponding to agents that are not neighbors of agent $i$ in a trajectory $k$, we disable the respective neurons. 
For a given ${x} \in \mathbb{R}^{2Nn}$, we denote by $u_{i,nn}({x})$ the output of the neural network of agent $ i \in \mathcal{N}$, and ${u}_{nn}({x}) \coloneqq [u_{1,nn}({x})^\top,\dots,u_{N,nn}({x})^\top]^\top$.

We stress that we do not require the training trajectories $\mathcal{J}$ to correspond to the formation instance $\mathcal{F}$ specified in \eqref{eq:F instance}. That is, each trajectory $k$ might be 
derived from the execution of a formation instance $\mathcal{F}_k = (x_0^k,{f}^k,{g}^k,{c}^k,\bar{\mathcal{G}}^k,{x}^k(0))$ that is different than the one specified in \eqref{eq:F instance}, i.e., different leader profile $x_0^k$, agent dynamics ${f}^k$, ${g}^k$, formation offsets ${c}^k$, communication graph $\bar{\mathcal{G}}^k$, and initial agent conditions ${x}^k(0)$, for all $k\in \mathbb{K}$ and some index set $\mathbb{K} \subset \mathbb{N}$.
%{The dynamic terms $f_k,g_k$, which produce
%the training trajectories, are ``similar'' to the actual terms $f,g$; we define such a similarity by the inequalities 
%\begin{subequations} \label{eq:similarity}
%\begin{align}
% \|f_{k,i}(x_i,t) + f_i(x_i,t) \| & \leq d_f \\
% \|g_{k,i}(x_i,t) + g_i(x_i,t) \| & \leq d_g
%\end{align}
%\end{subequations} 	 
%for all $x_i \in \mathbb{R}^{2n}$, $t \geq 0$, $k\in \mathbb{K}$, $i \in \mathcal{N}$, where $d_f$ and $d_g$ are unknown positive constants and 
%we use the notation ${f}_k({x},t) \coloneqq [f_{k,1}(x_1,t)^\top,\dots,f_{k,N}(x_N,t)^\top]^\top\in\mathbb{R}^{Nn}$, ${g}_k({x},t) \coloneqq \textup{diag}\{g_{k,1}(x_1,t),\dots, g_{k,N}(x_N,t)\} \in\mathbb{R}^{Nn\times Nn}$. We note that $f_{k,i}$ and $g_{k,i}$ are considered to be unknown, for all $k\in\mathbb{K}$, $i\in\mathcal{N}$. }

Since the training trajectories are produced by the instances $\mathcal{F}_k$, which are different from $\mathcal{F}$, we do not expect the neural networks to learn how to achieve the formation task at hand, but rather to be able to adapt to the entire collection of tasks.  The motivation for training the neural networks with different tasks
and dynamics is the following. Since the tasks correspond to bounded trajectories, the respective stabilizing controllers compensate successfully the dynamics in \eqref{eq:dynamics}. Therefore,
the neural networks aim to approximate “average” distributed 
controllers that retain this property, i.e., the boundedness of the multi-agent dynamics \eqref{eq:dynamics}. By using such
approximation, the online feedback-control policy, which is illustrated in the next section, is able to 
guarantee achievement of the formation task at hand, without using any explicit information on the dynamics. 
We explicitly model the aforementioned approximation via the following assumption on the closed-loop system trajectory that is driven by the neural networks' output. 
%We further use the notation $\underline{g} \coloneqq \min_{i\in\mathcal{N}}\{ \lambda_{\min}(g_i)\}$ where  where $\lambda_{\min}(\cdot)$ is the minimum-eigenvalue operator. 

{
\begin{assumption} \label{ass:nn}
	There exists $r > 0$ such that the stacked vector of outputs $u_{nn}(x)$ of the trained neural networks satisfies 
	\begin{align} \label{eq:nn cond}
		&\hspace{-3mm}  e_2^\top ( f(x(e),t) + g(x(e),t)u_{nn}(x) - \ddot{\bar{x}}_{0,1} )  \leq \kappa \|e_2\|^2
	\end{align}
for all $e$ satisfying $\|e\| \leq r$, where $\kappa$ is a positive constant.  
\end{assumption}
}

{Assumption \ref{ass:nn} is a sufficient condition for the prevention of finite-time escape of the  error trajectory $e(t)$ when the agents apply only the neural-network controllers, i.e., of the solution of the differential equation $\ddot{e}_2 = H( f(x,t) + g(x,t)u_{nn}(x) - \ddot{\bar{x}}_{0,1}) - K_1^2e_1 + K_1e_2$. 
Indeed, when the multi-agent system is driven solely by the neural-network controllers and satisfies \eqref{eq:nn cond}, one can find a Lyapunov function $V(e) = e^\top G e$, for a suitable constant matrix $G\in\mathbb{R}^{2nN\times 2nN}$, satisfying\footnote{$\lambda_{\min}$ and $\lambda_{\max}$ denote the minimum and maximum eigenvalues, respectively.} $\lambda_{\min}(G)\|e\|^2 \leq V(e) \leq \lambda_{\max}(G)\|e\|^2$ and $\dot{V} \leq \alpha \|e\|^2 \leq \frac{\alpha}{\lambda_{\min}\{G\}} V$, for all $\|e\| \leq r$ and a positive constant $\alpha$. Therefore, we conclude that $\lambda_{\min}(G)\|e(t)\|^2 \leq V(e(t)) \leq V(e(0)) \exp\left( \frac{\alpha}{\lambda_{\min}\{G\}} t \right)$, which prevents any finite-time escape of $e(t)$. Further note that the constants $r$ and $\kappa$ in \eqref{eq:nn cond} are \textit{unknown}. }

{
%when $r_2 \geq \|e\| \geq r_1$. Therefore, from all the initial conditions that satisfy $\|e(0)\| \leq r_2$, the neural-network controllers are able to guarantee the boundedness of the error state $e(t)$.
%We stress that the function $\Lambda(t)$ and the constants $r_1$ and $r_2$ are unknown.   
Assumption \ref{ass:nn} is motivated by
(i) the property of neural networks to approximate a continuous function arbitrarily well in a compact
domain for a large enough number of neurons and layers \cite{cybenko1989approximation}, and (ii) the fact that the neural networks are trained with bounded trajectories.  
As mentioned before, the collection of tasks that the neural networks
are trained with correspond to bounded trajectories. Hence, in view of the similarity of the dynamic terms that produce the training trajectories, the neural networks are expected to approximate a control policy that maintains the boundedness of the state trajectories as per \eqref{eq:nn cond}. 
%Since $f_i(\cdot)$ and $g_i(\cdot)$ are continuous in $x_i$ and bounded in $t$, they are also expected
%to be bounded, justifying the inequality \eqref{eq:ass nn bound}.
Contrary to the related works (e.g., \cite{vamvoudakis2011multi,lewis2013cooperative,fei2020neural,liu2015neural,huang2006nonlinear,modares2017optimal}), however, we do not adopt approximation schemes for the system
dynamics. In fact, a standard assumption in the related literature is the approximation of an unknown function by a single-layer neural network as $\Theta(x) \vartheta + \epsilon$, 
 where $\Theta(x)$ is a \textit{known} matrix of radial basis function, $\vartheta$ is a vector of unknown constants, and $\epsilon$ is a constant error assumed sufficiently small. 
Nevertheless, Assumption \ref{ass:nn} is a less strict assumption; it does not require sufficiently good neural-network approximation through a sufficiently small error $\epsilon$ or 
knowledge of any radial-basis term $\Theta(x)$.  
Moreover, Assumption \ref{ass:nn} does not
imply that the neural-network outputs $u_{i,nn}({x},t)$ guarantee accomplishment of the formation task. It is merely a growth condition on the the solution of the system driven by $u_{nn}(x)$. In practice, \eqref{eq:nn cond} can be achieved by rich exploration of the state space by the leader agent $x_0^k$ in the training data $\mathcal{F}^k$. In the numerical experiments of Section \ref{sec:exps}, we show that \eqref{eq:nn cond}  holds true along the executed trajectories of the multi-agent system.}

{
We note that the neural-network controllers $u_{nn}$ can be replaced by other learning methodologies, as long as Assumption 3 holds. 
Nevertheless, the rich structure of neural networks makes them great candidates for approximating a control policy that satisfies \eqref{eq:nn cond}. 
}
%As shown in the experiments of Section \ref{sec:exps}, the proposed algorithm, consisting of the trained neural networks and the adaptive feedback control policy (illustrated in the next section), is still able to guarantee the accomplishment of the given formation task, as specified by the formation instance $\mathcal{F}$. This is an important attribute of the proposed scheme, since it illustrates its robustness and adaptability to different agent dynamics, formation tasks, and inter-agent communication topologies between the training data and the online execution.

%Note that the controller that the neural network aims to approximate is not associated to the specific task defined in the problem formulation of Section \ref{sec:PF}, but a collection of heterogeneous formation tasks. 
%Therefore, we do not expect the neural networks to learn how to accomplish this specific task, but rather to be able to adapt to the entire collection of tasks. 
%This is an important attribute of the proposed scheme, since it can generalize over different formation tasks; that is, the formation task to be accomplished might correspond to different specification offsets $\bar{c}$ than the ones used in the training and different dynamic parameters of the agents.
%that is, the specific formation specification to be accomplished can be any task of the aforementioned collection. 

\subsection{Distributed Control Policy}

We now design a distributed, adaptive feedback control policy to accomplish the formation task dictated by the graph topology $\bar{\mathcal{G}}$, the leader profile $x_0(t)$, and offsets $c_{ij}$, $(i,j)\in\bar{\mathcal{E}}$, given in Section \ref{sec:PF}. 
%We consider both undirected and directed graph topologies for $\bar{\mathcal{G}}$. \\ 

%\noindent \underline{\textbf{Undirected graph topology}}\vspace{2mm}

%We first consider an undirected graph topology, for which we impose the following assumption:
%%Furthermore, we impose the following assumption on the graph connectivity:
%\begin{assumption} \label{ass:graph connect}
%	The graph $\mathcal{G}$ is connected and there exists at least one $i\in\mathcal{N}$ such that $b_i =1$.
%\end{assumption}
%One can prove that the aforementioned assumption implies that
%$\mathcal{L} + \mathcal{B}$ satisfies the condition of Theorem \ref{th:M matrix 1} and is, therefore, a non-singular M-matrix (see Def. \ref{def:M-matrix}).
%An M-matrix is a square matrix having its off-diagonal entries non-positive and all principal minors nonnegative, thus $\mathcal{L} + \mathcal{B}$ is positive definite [15]. 
% We first impose a minor assumption on the closed-loop system trajectories that are driven by the neural networks' outputs.

%We now design the adaptive feedback control policy. 
We define the adaptation variables $\hat{d}_{i,1}$ for each agent $i\in\mathcal{N}$, with $\hat{d}_1 \coloneqq [\hat{d}_{1,1},\dots,\hat{d}_{N,1}]^\top \in\mathbb{R}^N$, %The $2$nd-order nature of the system \eqref{eq:dynamics} motivates us to pursue a back-stepping approach \cite{krstic1995nonlinear}. 
%We define the augmented errors for each agent 
%\begin{align}\label{eq:e_2}
%	e_{i,2} \coloneqq \dot{e}_{i,1} + k_{i,1}e_{i,1}
%\end{align}
%where $k_{i,1}$ are positive constants, for all $i\in\mathcal{N}$, and the respective stacked vector 
%\begin{align*}
% 	e_2 \coloneqq [e_{i,2}^\top,\dots,e_{N,2}^\top]^\top \in %\mathbb{R}^{nN}.
%\end{align*}
and design the distributed control policy as 
{
\begin{subequations} \label{eq:control and adapt}
\begin{align} \label{eq:control law}
	u_i = u_{i,nn}({x}) -(k_{i,2}  + \hat{d}_{i,1})e_{i,2} % -  \hat{d}_{i,2}\hat{e}_{i,2}
\end{align}
where $k_{i,2}$ are positive constants,
for all $i\in\mathcal{N}$. 
We further design the updates of the adaptation variables $\hat{d}_{i,1}$ as 
\begin{align}  \label{eq:adaptation law 1}
		\dot{\hat{d}}_{i,1} &\coloneqq \mu_{i,1}\|e_{i,2}\|^2 
\end{align}
\end{subequations}
where $\hat{d}_{i,1}(0) > 0$ and $\mu_{i,1}$ are  positive constants, for all $i\in\mathcal{N}$. 
}
\begin{figure*}
	\centering
	\includegraphics[width=.9\textwidth]{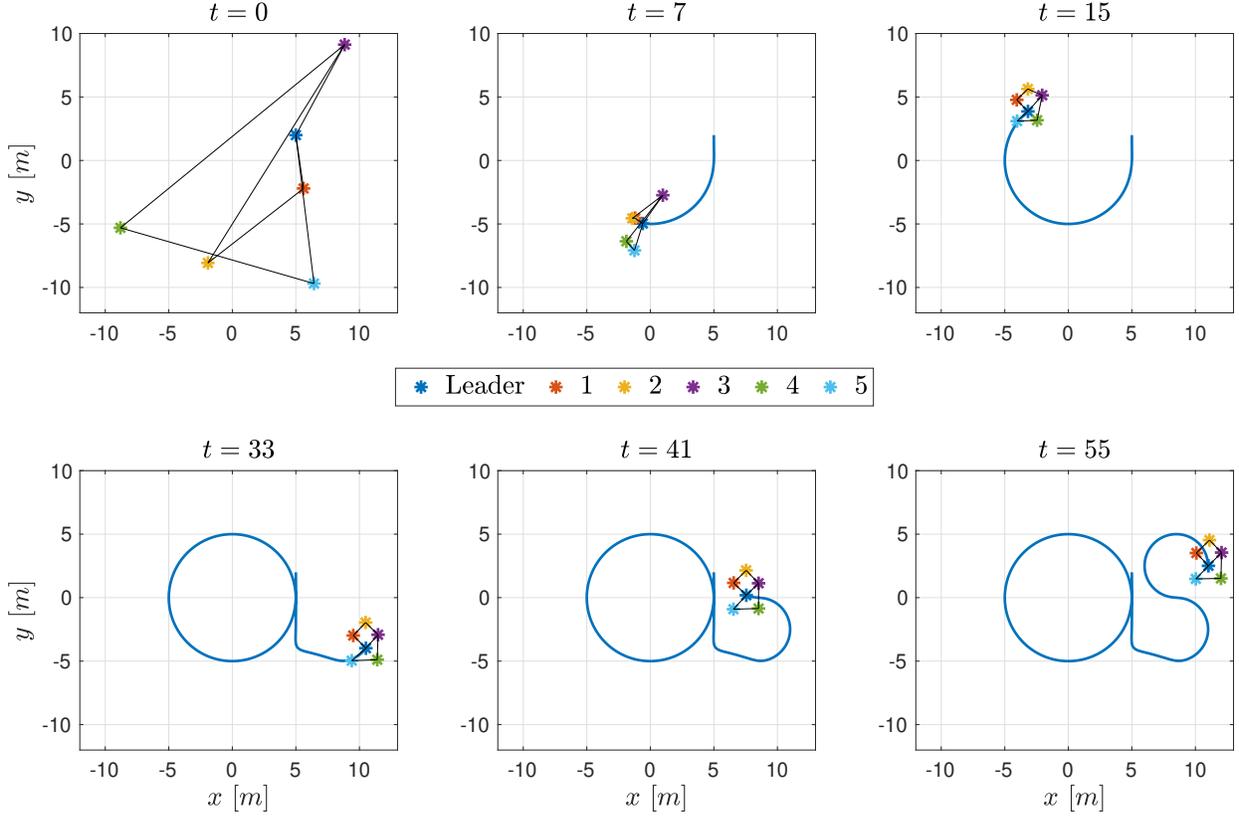}
	\caption{Snapshots of the first experiment in the $x$-$y$ plane. The agents converge to the desired formation (see bottom-middle and bottom-right plots) around the leader, which follows a pre-specified trajectory (continuous blue line). The black lines represent the communication edge set $\bar{\mathcal{E}}$ of the agents.}
	\label{fig:AS xy plot}
\end{figure*}

\begin{figure}
	\centering
	\includegraphics[width=.4\textwidth]{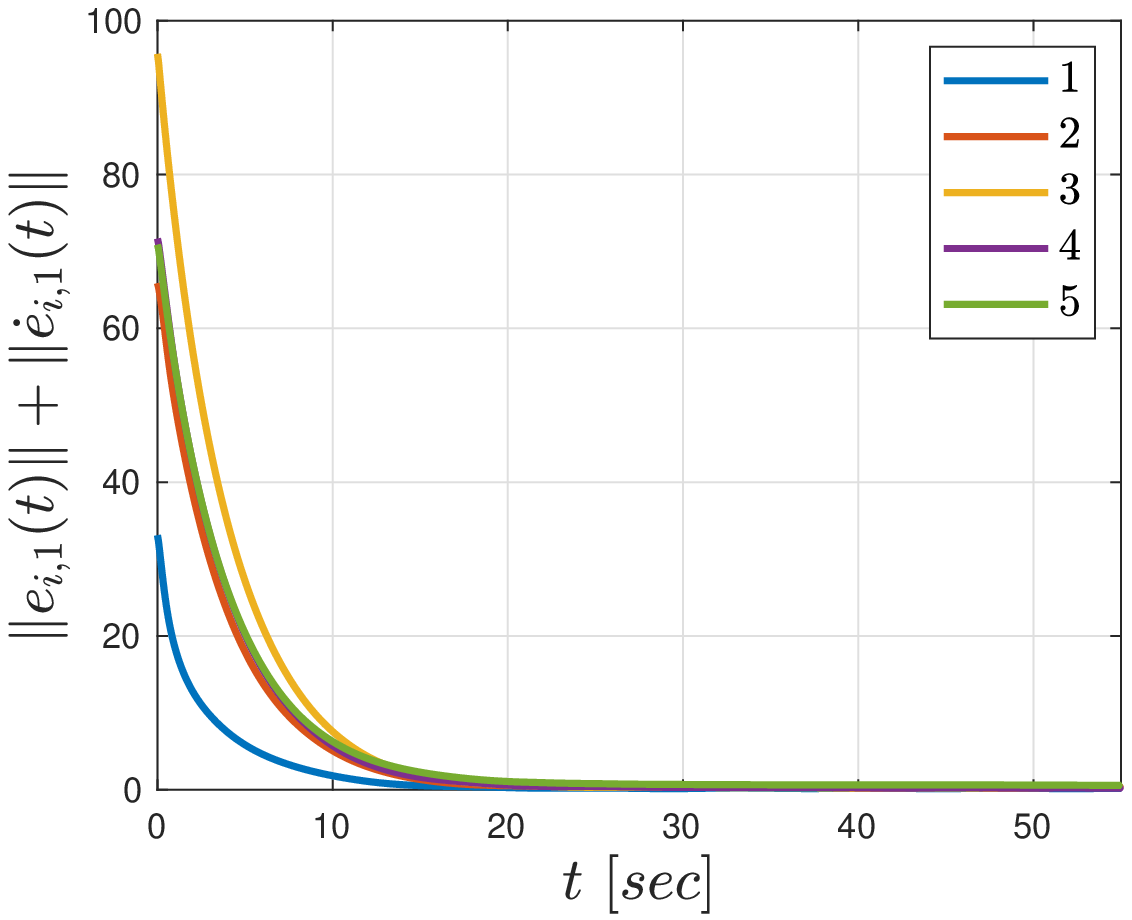}
	\includegraphics[width=.4\textwidth]{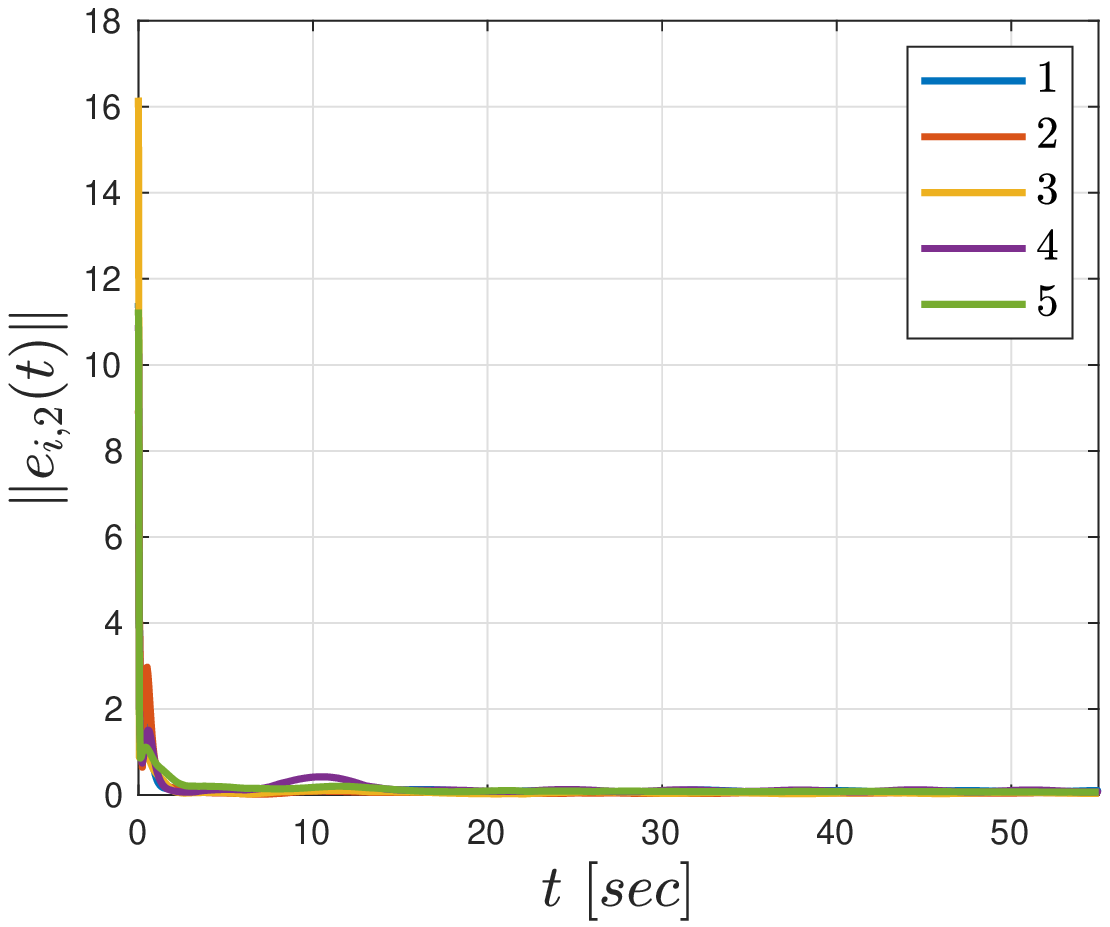}
	\caption{Evolution of the error signals $\|e_{i,1}(t)\|+\|\dot{e}_{i,1}(t)\|$, and $\|e_{i,2}(t)\|$,  for $i\in\{1,\dots,5\}$, and $t\in[0,55]$, in the first experiment.}
	\label{fig:AS errors}
\end{figure}

\begin{figure}
	\centering
	\includegraphics[width=.475\textwidth]{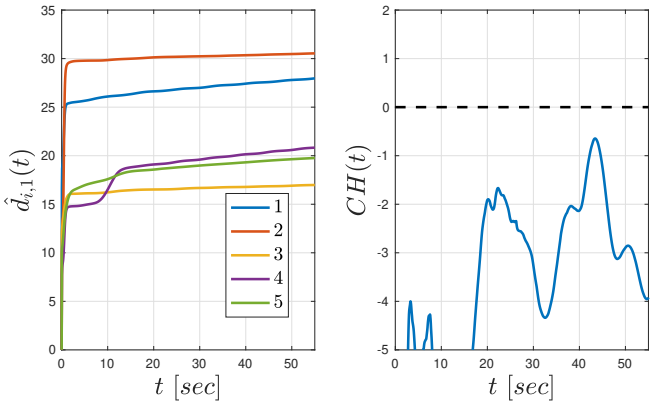}
	\caption{{Left: The evolution of the adaptation signals $\hat{d}_{i,1}(t)$ for $i\in\{1,\dots,5\}$, in the first experiment. Right: The evolution of $CH(t)$ in the first experiment.}}
	\label{fig:AS adaptation and CHECK}
\end{figure}

\begin{figure}
	\centering
	\includegraphics[width=.5\textwidth]{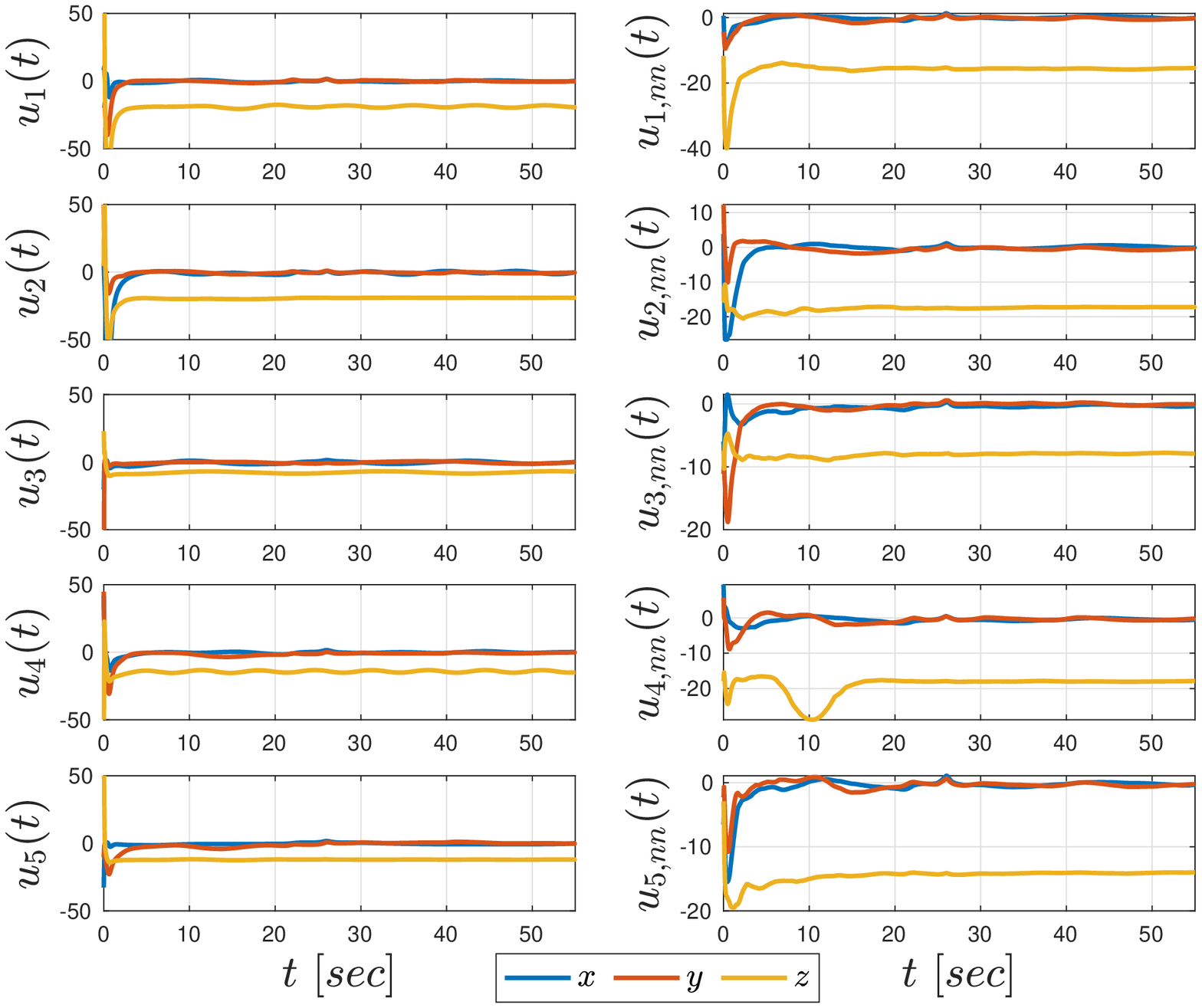}
	\caption{The evolution of the control inputs $u_{i}(t)$ and the neural-network controllers $u_{i,nn}(t)$, for $i\in\{1,\dots,5\}$, in the first experiment.}
	\label{fig:AS u}
\end{figure}

\begin{remark}
	{The control design is inspired by adaptive control methodologies \cite{krstic1995nonlinear}, where the time-varying coefficients $\hat{d}_{i,1}$ adapt, in coordination with the neural-network controllers, to the unknown dynamics in order to ensure closed-loop stability. In particular, by inspecting the proof of Theorem 1, it can be concluded that $\hat{d}_{i,1}$ aims to counteract the term $\frac{\|H^{-1}K_1\|}{\lambda_{\min}(g_i)} + \frac{\kappa}{\lambda_{\min}(g_i)}$, $i\in\mathcal{N}$. Intuitively, $\hat{d}_{i,1}$ increases according to \eqref{eq:adaptation law 1} until it dominates the aforementioned term, leading to convergence of $e_{i,2}$ to zero, for all $i\in\mathcal{N}$. }

{Note further that agent $i$'s control policy \eqref{eq:control and adapt} does not use any information on its own or its neighbors' dynamic terms $f_i(\cdot)$, $g_i(\cdot)$, or the constants $r$, $\kappa$ of \eqref{eq:nn cond}. 
Additionally, each agent uses only relative feedback from its neighbors, as can be verified by \eqref{eq:e error}, \eqref{eq:e_2} and \eqref{eq:control and adapt}. }
\end{remark}

 The following theorem, whose proof is given in the \hyperref[subsec:proof Th1]{appendix}, guarantees the accomplishment of the formation task.
\begin{theorem} \label{th:main}
	Let a multi-agent system evolve subject to the dynamics \eqref{eq:dynamics} under an undirected communication graph $\bar{\mathcal{G}}$.	 Under Assumptions {\ref{ass:g pd}-\ref{ass:nn}}, there exists a set $\bar{\Omega}_{\hat{x}} \subset \mathbb{R}^{N(2n+1)}$ such that, if $\big(e(0),\hat{d}_1(0)\big) \in \bar{\Omega}_{\hat{x}}$,  
the  distributed control mechanism guarantees $\lim_{t\to\infty}(e_{i,1},e_{i,2}) = 0$, for all $i\in\mathcal{N}$, as well as the boundedness of all closed-loop signals. 	 
\end{theorem}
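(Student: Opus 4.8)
The plan is to run a Lyapunov-based region-of-attraction analysis. First I would introduce, for each $i\in\mathcal{N}$, the positive constant $\underline{g}_i \coloneqq \inf_{x_i\in\Omega_i,\,t\ge 0}\lambda_{\min}(g_i(x_i,t))$, which is strictly positive by Assumption \ref{ass:g pd} together with the continuity and uniform-in-$t$ boundedness of $g_i$, and a constant \emph{ideal gain} $d_{i,1}^*>0$ (fixed later) with adaptation error $\tilde{d}_{i,1}\coloneqq \hat{d}_{i,1}-d_{i,1}^*$. I would then propose the Lyapunov candidate
\begin{align*}
V \coloneqq \tfrac{1}{2}\|e_1\|^2 + \tfrac{1}{2}e_2^\top H^{-1} e_2 + \sum_{i\in\mathcal{N}} \frac{\underline{g}_i}{2\mu_{i,1}}\tilde{d}_{i,1}^2,
\end{align*}
which is positive definite and radially unbounded in $(e_1,e_2,\tilde{d}_1)$ since $H^{-1}$ is positive definite (Assumption \ref{ass:graph connect}) and $\underline{g}_i>0$. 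Note that $V$ legitimately depends on the unknown quantities $\underline{g}_i,d_{i,1}^*$, whereas the implemented law \eqref{eq:control and adapt} does not.

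Next I would differentiate $V$ along \eqref{eq:e dynamics}. Writing \eqref{eq:control law} in stacked form as $u=u_{nn}(x)-(K_2+\hat{D}_1)e_2$ with $K_2\coloneqq\textup{diag}\{k_{1,2},\dots,k_{N,2}\}\otimes I_n$ and $\hat{D}_1\coloneqq\textup{diag}\{\hat{d}_{1,1},\dots,\hat{d}_{N,1}\}\otimes I_n$, the $e_2$-channel contributes $e_2^\top(f+g u_{nn}-\ddot{\bar{x}}_{0,1})$, bounded by $\kappa\|e_2\|^2$ on $\{\|e\|\le r\}$ via Assumption \ref{ass:nn}, together with $-\sum_{i}(k_{i,2}+\hat{d}_{i,1})e_{i,2}^\top g_i e_{i,2}$ and the coupling terms $-e_2^\top H^{-1}K_1^2 e_1 + e_2^\top H^{-1}K_1 e_2$; the $e_1$-channel yields $-e_1^\top K_1 e_1 + e_1^\top e_2$. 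Using $e_{i,2}^\top g_i e_{i,2}\ge \underline{g}_i\|e_{i,2}\|^2$ (valid once $x\in\Omega$, and since $\hat{d}_{i,1}>0$ is preserved because $\hat{d}_{i,1}(0)>0$ and $\dot{\hat{d}}_{i,1}\ge 0$), the key cancellation is that the adaptation term $\sum_{i}\underline{g}_i\tilde{d}_{i,1}\|e_{i,2}\|^2$ produced by \eqref{eq:adaptation law 1} exactly cancels the $\tilde{d}_{i,1}$-dependent part of $-\sum_{i}\hat{d}_{i,1}\underline{g}_i\|e_{i,2}\|^2$, leaving the constant gain $-\sum_{i}d_{i,1}^*\underline{g}_i\|e_{i,2}\|^2$. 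Bounding the indefinite cross terms by Young's inequality (absorbing half of $-e_1^\top K_1 e_1$, with $k_{\min,1}\coloneqq\min_i k_{i,1}$) and then choosing each $d_{i,1}^*$ large enough that $d_{i,1}^*\underline{g}_i$ dominates $\kappa+\|H^{-1}K_1\|$ up to the Young constants --- precisely the quantity $\tfrac{\|H^{-1}K_1\|}{\lambda_{\min}(g_i)}+\tfrac{\kappa}{\lambda_{\min}(g_i)}$ flagged in the Remark --- would give $\dot{V}\le -\tfrac{k_{\min,1}}{2}\|e_1\|^2-\|e_2\|^2\le 0$ on the region where Assumption \ref{ass:nn} and $x\in\Omega$ both hold.

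The delicate step is converting this conditional negativity into genuine invariance, which is what produces $\bar{\Omega}_{\hat{x}}$. Since the bounds hold only on a region $\mathcal{R}$ where $\|e\|\le r$ and $x(e)\in\Omega_1\times\cdots\times\Omega_N$, I would pick a level $c>0$ small enough that $\{V\le c\}\subset\mathcal{R}$; this is possible because $V\ge \tfrac12\min\{1,\lambda_{\min}(H^{-1})\}\|e\|^2$ bounds $\|e\|$ by $V$, and bounded $e$ together with the bounded leader signal bounds $x$ through \eqref{eq:e error stack} and \eqref{eq:e_2}. On $\{V\le c\}$ we have $\dot{V}\le 0$, so the set is forward invariant; defining $\bar{\Omega}_{\hat{x}}$ as the $(e,\hat{d}_1)$-description of this sublevel set guarantees that any trajectory starting in it never leaves, so Assumption \ref{ass:nn} and the gain bounds stay valid for all $t\ge 0$ and $\hat{d}_{i,1}$ remains bounded (monotone and bounded above, hence convergent).

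Finally, $V$ non-increasing and bounded below gives $e_1,e_2\in L_\infty$ and, integrating $\dot{V}$, $e_1,e_2\in L_2$; boundedness of $x$, $\hat{d}_1$, and continuity of $f,g,u_{nn}$ on the compact invariant set render $\dot{e}_1,\dot{e}_2$ bounded, so $e_1,e_2$ are uniformly continuous and Barbalat's lemma yields $\lim_{t\to\infty}(e_{i,1},e_{i,2})=0$ for all $i\in\mathcal{N}$; boundedness of the remaining closed-loop signals, in particular $u$, then follows. I expect the main obstacle to be exactly the regional bookkeeping of the third step: because Assumption \ref{ass:nn} holds only for $\|e\|\le r$ and $\lambda_{\min}(g_i)$ is lower-bounded only on the compact $\Omega_i$, one must simultaneously control $\|e\|$ and the underlying state $x$ with a single sublevel set, and must verify that the unknown, state-dependent gain $g_i$ is absorbed by weighting the adaptation term with the unknown constant $\underline{g}_i$ without that constant ever entering the implemented controller.
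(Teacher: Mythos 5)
Your proposal is correct and follows essentially the same route as the paper's proof: a Lyapunov function augmented with adaptation errors measured against an ideal gain that absorbs the unknown constants $\kappa$ and $\lambda_{\min}(g_i)$, a sublevel-set invariance argument to keep $\|e\|\le r$ and $x\in\Omega_x$ so that Assumptions \ref{ass:g pd} and \ref{ass:nn} remain applicable, and Barbalat's lemma for convergence. The only differences are cosmetic bookkeeping: the paper weights the $e_1$-term by $K_1^2H^{-1}/\underline{g}$ so the cross terms cancel exactly (avoiding your Young's-inequality step) and uses a single global $\underline{g}$ rather than per-agent $\underline{g}_i$.
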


Contrary to the works in the related literature (e.g., \cite{bechlioulis2016decentralized,verginis2019robust}) we do not impose reciprocal terms in the control input that grow unbounded in order to guarantee closed-loop stability. The resulting controller is essentially a simple linear feedback on $e_1$, $e_2$ with time-varying adaptive control gains, accompanied by the neural network output that ensures 
condition \eqref{eq:nn cond}. 

\begin{figure*}
	\centering
	\includegraphics[width=.9\textwidth]{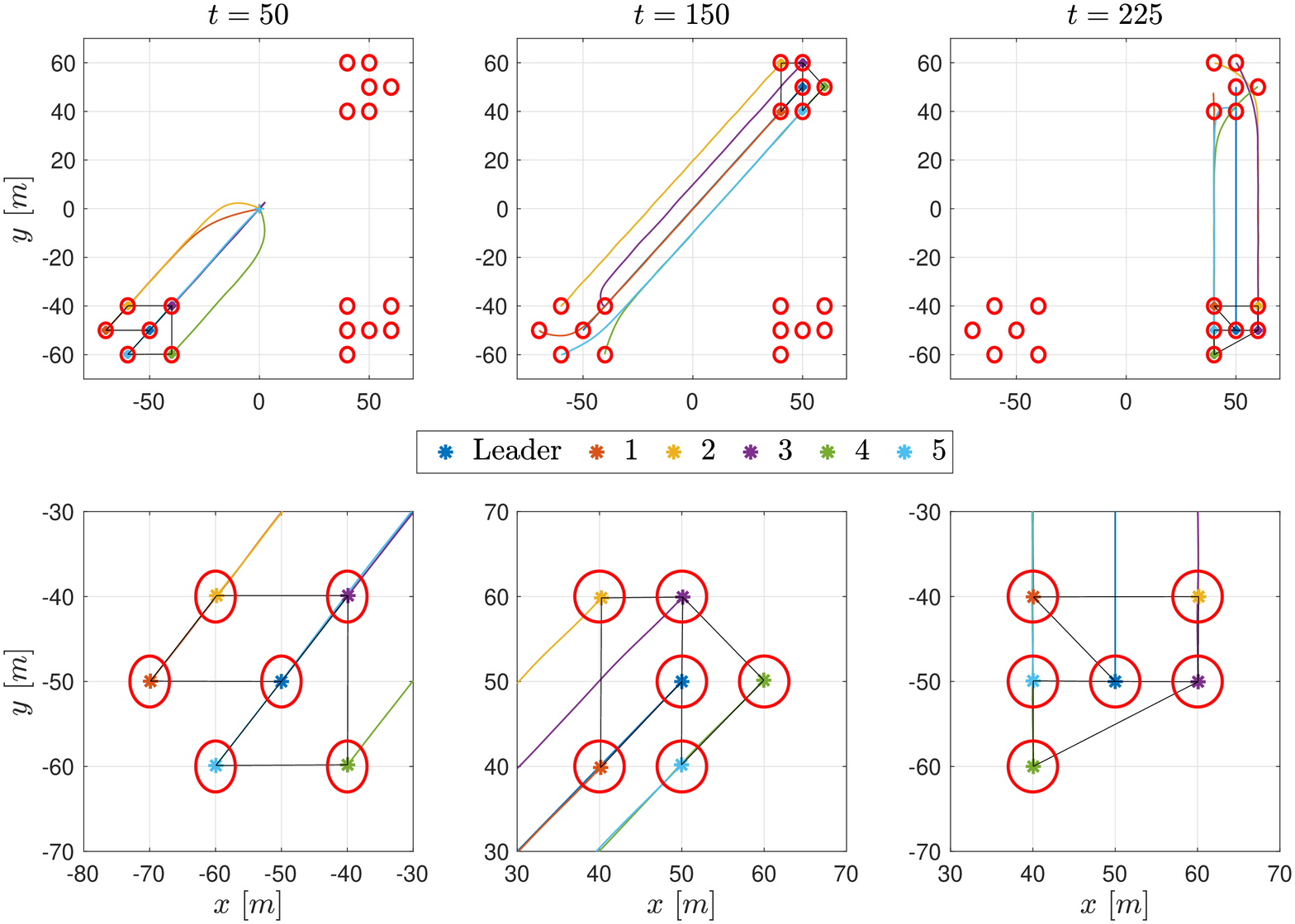}
	\caption{Snapshots of the second experiment (top) and their zoomed-in versions (bottom) in the $x$-$y$ plane. The agents converge to the desired formation around the leader at $t=50$, $t=150$, and $t=225$, which implies the visit of the regions of interest in the three areas. The black lines represent the communication edge set $\bar{\mathcal{E}}$ of the agents.
		The initial positions of the agents are depicted with $``+"$ in the top-left plot.}
	\label{fig:SURV xy plot}
\end{figure*}
\begin{figure}
	\centering
	\includegraphics[width=.35\textwidth]{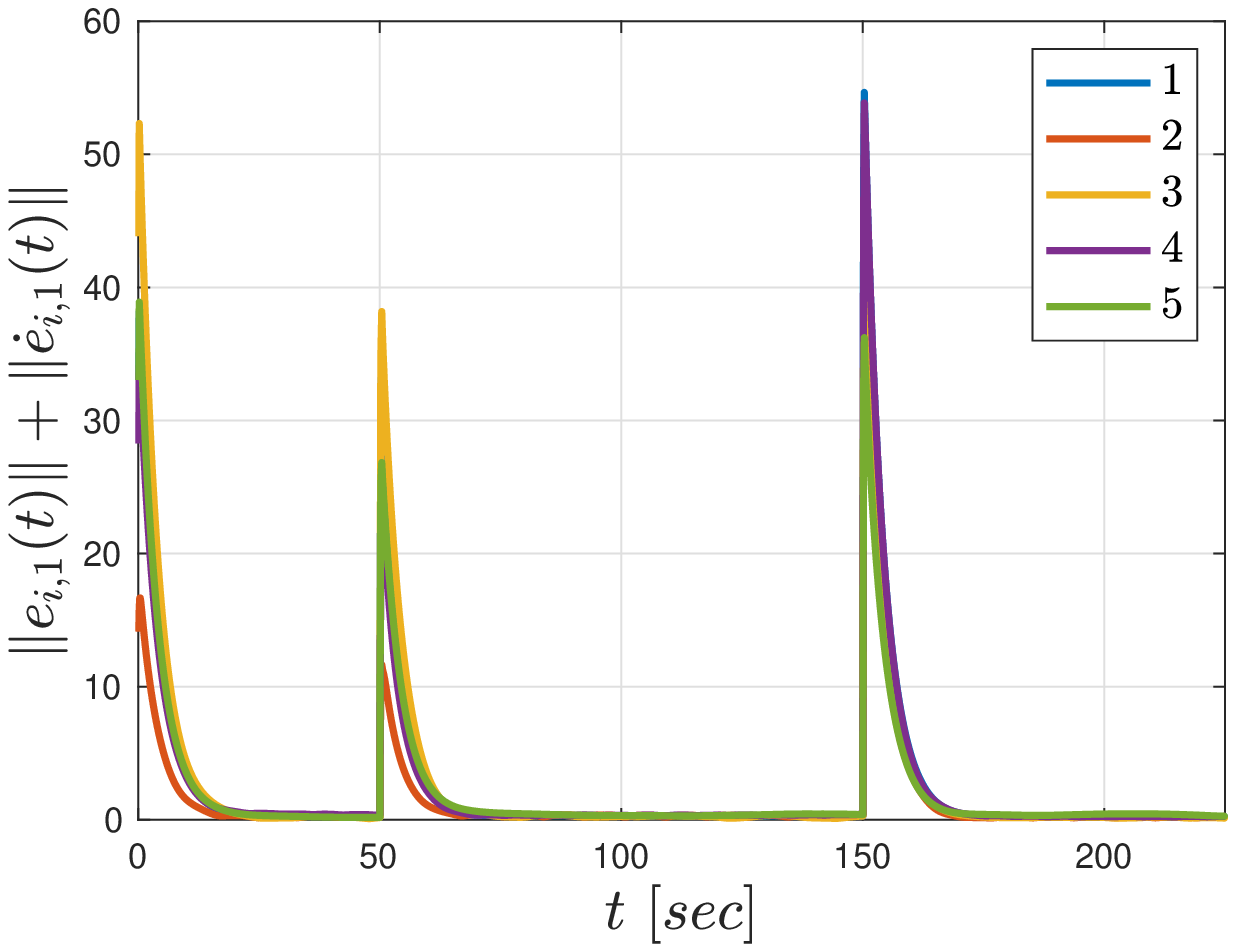}
	\includegraphics[width=.35\textwidth]{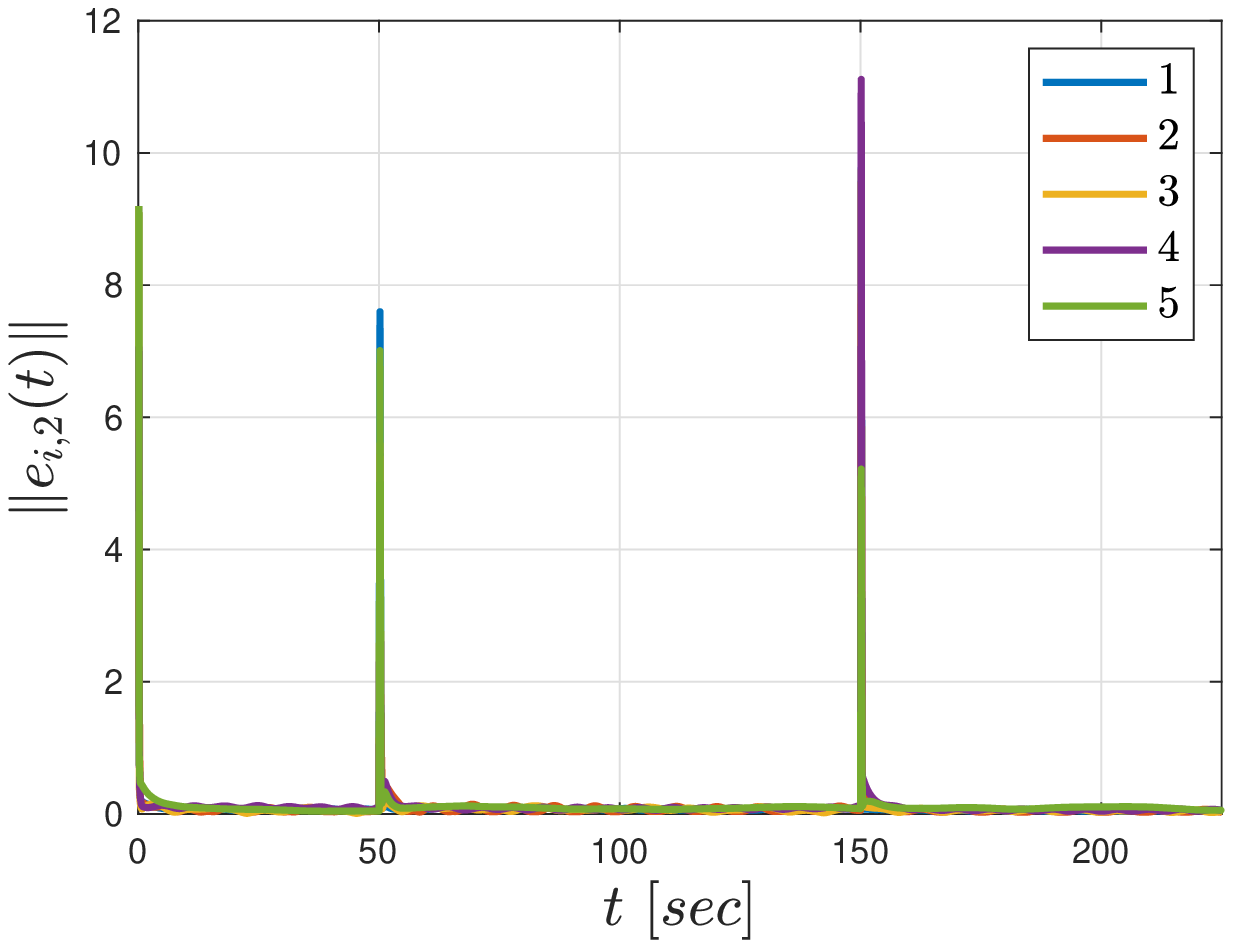}
	\caption{Evolution of the error signals $\|e_{i,1}(t)\|+\|\dot{e}_{i,1}(t)\|$ and $\|e_{i,2}(t)\|$, for $i\in\{1,\dots,5\}$, and $t\in[0,225]$, in the second experiment.}
	\label{fig:SURV errors}
\end{figure}

\begin{figure}
	\centering
	\includegraphics[width=.45\textwidth]{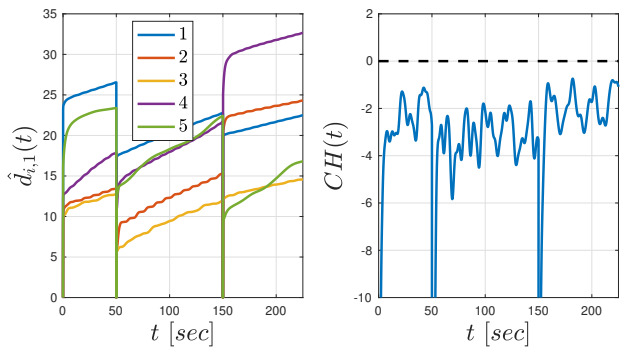}
	\caption{{Left: The evolution of the adaptation signals $\hat{d}_{i,1}(t)$ for $i\in\{1,\dots,5\}$, in the second experiment. Right: The evolution of $CH(t)$ in the second experiment.}}
	\label{fig:SURV adaptation}
\end{figure}

\begin{figure}
	\centering
	\includegraphics[width=.525\textwidth]{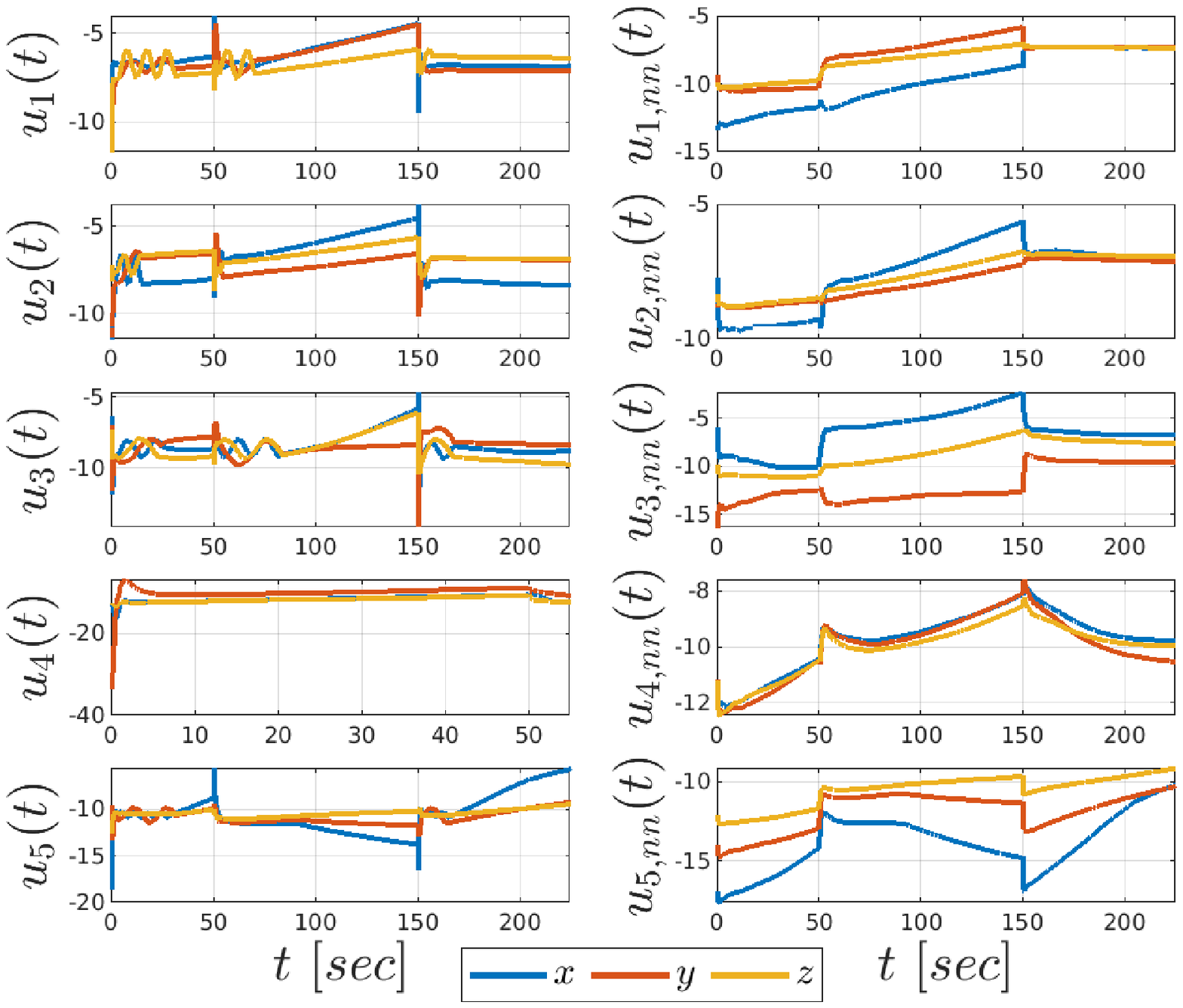}
	\caption{The evolution of the control inputs $u_{i}(t)$ and the neural-network controllers $u_{i,nn}(t)$, for $i\in\{1,\dots,5\}$, in the second numerical experiment.}
	\label{fig:SURV u}
\end{figure}

\section{Numerical Experiments} \label{sec:exps}
%This section is devoted to a series of numerical experiments. 
We consider $N=5$ follower aerial vehicles in $\mathbb{R}^3$ with dynamics of the form \eqref{eq:dynamics}, with 
\begin{align*}
	f_i(x_i,t) &= \frac{1}{m_i}(\bar{g}_r + d_{i,1}(t) +  d_{i,2}(x_i))  \\
	g_i(x_i,t) &= { \frac{ \|x_i\| + 0.5\sin(0.1t) + 1  }{m_i}}
\end{align*}
where $\bar{g}_r =[0,0,9.81]^\top$ is the gravity vector and $m_i \in \mathbb{R}$ is the mass of agent $i\in\mathcal{N}$. Furthermore, we choose $d_{i,1}(t)$, $d_{i,2}(x_i)$ as 
\begin{align*}
	d_{i,1}(t) &= 
	\begin{bmatrix}
		A_{i,1} \sin(\eta_{i,1}t + \phi_{i,1})  \\
		A_{i,2} \sin(\eta_{i,2}t + \phi_{i,2}) \\
		A_{i,3} \sin(\eta_{i,3}t + \phi_{i,3})  	
	\end{bmatrix} \\
	d_{i,2}(x_i) &= F_i y_i	
\end{align*}
with $y_i = [x_{i,2_1}^2, x_{i,2_2}^2, x_{i,2_3}^2, x_{i,2_1}x_{i,2_2}, x_{i,2_1}x_{i,2_3},x_{i,2_2},x_{i,2_3}]$, and we further use the notation $x_{i,2} = [x_{i,2_1},x_{i,2_2},x_{i,2_3}]^\top$ for all $i\in\mathcal{N}$. The terms $m_i$, $A_{i,\ell}$, $\eta_{i,\ell}$, $\phi_{i,\ell}$ are constants that take values in $(0,1)$; similarly, $F_i\in\mathbb{R}^{3 \times 6}$ is a constant matrix whose elements take values in $(0,1)$.
We evaluate the proposed algorithm in three test cases. In all of these cases, we choose the control gains of \eqref{eq:control and adapt} as $k_{i,1} = 0.1$, $k_{i,2} = \mu_{i,1} = 0.5$.

The first case consists of the stabilization of the followers around the leader, which is assigned with the tracking of a reference time-varying trajectory profile $x_0(t)$. We consider a communication graph modeled by the edge set $\bar{\mathcal{E}}$ $=$ $\{$ $(1,2)$, $(2,3)$, $(3,4)$, $(4,5)$, $(1,0)$, $(3,0)$, $(5,0)$ $\}$, i.e., agents $1$, $3$, and $5$ have access to the information of the leader.
The stabilization is dictated by the formation constants $c_{1,2} = -c_{2,1} = [1,1,0]^\top$,
$c_{2,3} = -c_{3,2}= [1,-1,0]^\top$,
$c_{3,4} = -c_{4,3}= [0,-2,0]^\top$,
$c_{4,5} = -c_{5,4}= [-2,0,0]^\top$,
$c_{1,0} = [1,-1,0]^\top$,
$c_{3,0} = [-1,-1,0]^\top$,
$c_{5,0} = [1,1,0]^\top$. 
 The aforementioned parameters, along with the agents' initial conditions, specify the first task's formation instance  $\mathcal{F} \coloneqq (x_0,f,g,c,\bar{\mathcal{G}},x(0))$.
We generate data from $100$ trajectories that correspond to different ${f}$, ${g}$, ${x}(0)$ than in $\mathcal{F}$, but with the same leader profile $x_0$ and inter-agent formation offsets ${c}$ and communication graph $\bar{\mathcal{G}}$. The differences in ${f}$ and ${g}$ are created by assigning random values, in $(0,1)$, to the constants $m_i$, $A_{i,\ell}$, $\eta_{i,\ell}$, $\phi_{i,\ell}$, and $F_i$, for all $i\in\mathcal{N}$. We further assign the initial conditions for each agent as $x_{i,1}(0) = x_{0,1}(0) + \textup{rand}(-4,4)[1,1,1]^\top$, and $x_{i,2}(0) = \textup{rand}(-2,2)[1,1,1]^\top$, $i\in\mathcal{N}$; we set the leader agent's initial condition as $x_{0,1}(0) = [5,2,10]^\top$, $x_{0,2}(0) = [0.0039,-09836,0]^\top$ for all trajectories. We use the generated data to
train $5$ neural networks, one for each agent. More details regarding the training can be found at the end of the section.
 We test the control policy \eqref{eq:control and adapt} using the task's formation instance $\mathcal{F}$. 
The results are depicted in Figs. \ref{fig:AS xy plot}-\ref{fig:AS u}; Fig. \ref{fig:AS xy plot} depicts snapshots of the multi-agent formation in the $x$-$y$ plane and Fig. \ref{fig:AS errors} shows the evolution of the error signals $\|e_{i,1}(t)\| + \|\dot{e}_{i,1}(t)\|$ and $\|e_{i,2}(t)\|$ for $i\in\{1,\dots,5\}$. Fig. \ref{fig:AS adaptation and CHECK} shows the evolution of the adaptation variables $\hat{d}_{i,1}(t)$, $i\in\mathcal{N}$, and the signal {$CH(t) = e_2(t)^\top  (f(x(t),t) + g(x(t),t)u_{{nn}}(x(t)) - \ddot{\bar{x}}_{0,1}(t)) - 100 \|e_2\|$, which is always negative, verifying thus that Assumption \ref{ass:nn} holds for $\kappa =100$.} Finally, Fig. \ref{fig:AS u} depicts the evolution of the control inputs $u_{i}(t)$, $u_{i,nn}(t)$, $i\in\{1,\dots,5\}$. 
One concludes that the multi-agent system converges successfully to the pre-specified formation, whose x-y shape is depicted in the bottom-right plot of Fig. \ref{fig:AS xy plot}. 
%\footnote{The agents converge to the same value in the $z$-coordinate.}. 
%More details regarding the experiment can be found in the Supplementary Material.

The second case comprises a surveillance task, where the agents need to periodically surveil three  areas in the environment. We choose the same communication graph as in the first case. Each area consists of 6 spherical regions of interest; the regions of interest of the first area are centered at $[-50,-50,-10]^\top$, $[-70,-50,10]^\top$, $[-60,-40,10]^\top$, $[-40,-40,10]^\top$, $[-40,-60,10]^\top$,  $[-60,-60,10]$; the regions of interest of the second area are centered at $[50,50,10]^\top$, $[40,40,10]^\top$, $[40,60,10]^\top$, $[50,60,10]^\top$, $[60,50,10]^\top$ $[50,40,10]^\top$; and the regions of interest of the third area are centered at $[50,-50,10]^\top$, $[40,-40,10]^\top$, $[60,-40,10]^\top$, $[60,-50,10]^\top$, $[40,-60,10]^\top$, $[40,-50,10]^\top$. 
The leader agent navigates sequentially to one of the regions in the areas, and by setting the constants $c_{ij}$, $(i,j)\in\bar{\mathcal{E}}$, according to the geometry of the regions, the followers aim to visit the remaining five regions in each area. More specifically, we set the formation constants as $c_{1,2} = -c_{2,1} = [10,10,0]^\top$,
$c_{2,3} = -c_{3,2}= [20,0,0]^\top$,
$c_{3,4} = -c_{4,3}= [0,-20,0]^\top$,
$c_{4,5} = -c_{5,4}= [-20,0,0]^\top$,
$c_{1,0} = [20,0,0]^\top$,
$c_{3,0}  = [-10,-10,0]^\top$,
$c_{5,0} = [10,10,0]^\top$ for the first area, 
$c_{1,2} = -c_{2,1} = [0,20,0]^\top$,
$c_{2,3} = -c_{3,2}= [10,0,0]^\top$,
$c_{3,4} = -c_{4,3}= [10,-10,0]^\top$,
$c_{4,5} = -c_{5,4}= [-10,-10,0]^\top$,
$c_{1,0} = [10,10,0]^\top$,
$c_{3,0}  = [0,-10,0]^\top$,
$c_{5,0} = [10,10,0]^\top$ for the second area,
and 
$c_{1,2} = -c_{2,1} = [20,0,0]^\top$,
$c_{2,3} = -c_{3,2}= [0,-10,0]^\top$,
$c_{3,4} = -c_{4,3}= [-20,-10,0]^\top$,
$c_{4,5} = -c_{5,4}= [0,10,0]^\top$,
$c_{1,0} = [10,-10,0]^\top$,
$c_{3,0}  = [-10,0,0]^\top$,
$c_{5,0} = [10,0,0]^\top$ for the third area.

Similarly to the first case, we generate data from $100$ trajectories that correspond to different ${f}$, ${g}$, ${x}(0)$ than in the task's formation instance $\mathcal{F}$; the differences in ${f}$, ${g}$ are created by assigning random values, in $(0,1)$, to the constants $m_i$, $A_{i,\ell}$, $\eta_{i,\ell}$, $\phi_{i,\ell}$, and $F_i$, for all $i\in\mathcal{N}$. The initial conditions of the agents are set as $x_{i,1}(0) = x_{0,1}(0) + \textup{rand}(-10,10)[1,1,1]^\top$, and $x_{i,2}(0) = \textup{rand}(-2,2)[1,1,1]^\top$, $i\in\mathcal{N}$, and of the leader agent as $x_{0,1}= [0,0,10]^\top$, $x_{0,2} = [0,0,0]^\top$. We use the data to train $5$ neural networks, one for each agent. We test the control policy \eqref{eq:control and adapt} on $\mathcal{F}$, giving the results depicted in Figs. \ref{fig:SURV xy plot}-\ref{fig:SURV u}; Fig. \ref{fig:SURV xy plot} depicts snapshots of the agents' visit to the three areas (at $t=50$, $t=150$, and $t=225$ seconds, respectively), and Fig. \ref{fig:SURV errors} depicts the evolution of the signals $\|e_{i,1}(t)\| + \|\dot{e}_{i,1}(t)\|$ and $\|\dot{e}_{i,2}(t)\|$, for all agents $i\in\{1,\dots,5\}$. Fig. \ref{fig:AS adaptation and CHECK} shows the evolution of the adaptation variables $\hat{d}_{i,1}(t)$, $i\in\mathcal{N}$, and the signal {$CH(t) = e_2(t)^\top  (f(x(t),t) + g(x(t),t)u_{nn}(x(t)) - \ddot{\bar{x}}_{0,1}(t)) - 100 \|e_2\|$, which is always negative, verifying thus that Assumption \ref{ass:nn} holds for $\kappa =100$.} Finally, Fig. \ref{fig:AS u} depicts the evolution of the control inputs $u_{i}(t)$, $u_{i,nn}(t)$, $i\in\{1,\dots,5\}$. 
As illustrated in the figures, the agents converge successfully to the three pre-specified formations, visiting the regions of interest in the three areas.
%More details regarding the experiment can be found in the Supplementary Material.

The first two cases considered training data that correspond to the exact formation task, defined by the leader profile $x_0$ and the constants $c_{ij}$, and communication graph $\bar{\mathcal{G}}$. In the third case, we generate $120$ different formation instances $\mathcal{F}^k \coloneqq (x^k_0,{f}^k,{g}^k,{c}^k,\bar{\mathcal{G}}^k,{x}^k(0))$, $k\in\{1,\dots,120\}$, i.e., different trajectory profiles for the leader, different terms ${f}^k$ and ${g}^k$ for the agents, different communication graphs $\bar{\mathcal{G}}$, different formation constants $c_{ij}$, for $(i,j)\in\bar{\mathcal{E}}$, and different initial conditions for the agents. In every instance $k$, we set the parameters in ${f}^k$, and ${g}^k$ as in the previous two cases, we set randomly the communication graph $\bar{\mathcal{G}}^k$ such that it satisfies Assumption \ref{ass:graph connect}, we set random offsets $c_{ij}$ in the interval $(-5,5)\bar{1}_3$, for $(i,j)\in\bar{\mathcal{E}}$, and the initial conditions of the agents as $x_{i,1}(0) = \textup{rand}(-10,10)\bar{1}_3$, $x_{i,2}(0) = \textup{rand}(-2.5,2.5)\bar{1}_3$, for all $i\in\{1,\dots,5\}$. Finally, we set the leader trajectory $x_0$ for each instance $k\in\{1,\dots,120\}$ as follows: we create four points in $\mathbb{R}^3$ randomly in $(-10,10)$ in the $x$- and $y$- directions, and in $(1,20)$ in the $z$ direction. We then create a random sequence of these points, and set the leader trajectory as a smooth path that visits them according to that sequence, with a duration of $40$ seconds.

We separate the $120$ instances into $100$ training and $20$ test instances. We train next $5$ neural networks, one for each agent, using data from system runs that correspond to the $100$ first training instances $\mathcal{F}^k$, $k\in\{1,\dots,100\}$. We test the control policy on the $20$ first training instances $\mathcal{F}^k$, $k\in\{1,\dots,20\}$, as well as on the $20$ test instances that were not used in the training, i.e., $\mathcal{F}^k$, $k\in\{101,\dots,120\}$. In addition, we compare the performance of the proposed control algorithm with a \textit{no-neural-network} (no-NN) control policy, i.e., a policy that does not employ the neural network, (term  $u_{i,nn}$ in \eqref{eq:control law}) and with a non-adaptive control policy $u_{i} = u_{i,nn} - k_{i,2}e_{i,2}$, i.e., without the adaptation terms $\hat{d}_{i,1}$. The comparison results are given in Fig. \ref{fig:comparison errors}, which depicts the mean and standard deviation of the signal $\|{e}_1(t)\| + \|\dot{{e}}_1(t)\|$ for the 20 of the training instances (top), and for the 20 test instances (bottom). It can be verified that, in both cases, the proposed control algorithm outperforms the other two policies, which, in many of the instances, resulted in unstable closed-loop systems. 
%More details regarding the experiment can be found in the Supplementary Material.

We now provide more details regarding the collection of data and the training of the neural networks for the aforementioned experiments. For the execution of the trajectories that are used in the training of the neural networks, we use the control policies
\begin{align*}
	u_i = g_i(x_i,t)^{-1}(u_0(t) - e_{i,2} - f_i(x_i,t)),
\end{align*}
for all $i\in\mathcal{N}$. The data for the training of the neural networks consist of 100 system trajectories, sampled at 500 points, making a total of 50000 points. The neural networks we use consist of 4 fully connected layers of 512 neurons; each layer is followed by a batch-normalization module and a ReLU activation function. For the training, we use the Adam optimizer, the mean-square-error loss function, and learning rate of $10^{-3}$. Finally, we use a batch size of 256, and we train the neural networks until an average (per batch) loss of the order of $10^{-4}$ is achieved.

\begin{figure}
	\includegraphics[width=.5\textwidth]{}
	\caption{Evolution of the mean (left) and standard deviation (right) of the signals $\|\bar{e}_1(t)\| + \|\dot{\bar{e}}_1(t)\|$ for the 20 training instances (top) and the 20 test instances (bottom). }
	\label{fig:comparison errors}
\end{figure}

%\section{Discussion and Limitations}
%
%As shown in the experimental results, the distributed control algorithm is able to accomplish formation tasks, even when these
%are not considered when generating the training data. Similarly, the trajectories used in the training data were generated using systems with different agent dynamics and communication graphs, and not specifically the ones used in the tests. The aforementioned attributes signify the ability of the proposed algorithm to generalize to different tasks and systems with different dynamic models and communication topologies. 
%Nevertheless, the proposed control policy is currently restricted to systems of the form \eqref{eq:dynamics} satisfying Assumption \ref{ass:g pd} and is not able to take into account under-actuated systems (e.g., the acrobot or cart-pole system) or systems with non-holonomic constraints; the analysis of such systems consists part of our future work. Finally, the discontinuities of \eqref{eq:control law} might be problematic and create
%chattering when implemented in real actuators. A continuous approximation that has shown to yield satisfying performance is the boundary-layer technique (\cite{slotine1991applied}).

\section{Conclusion and Future Work} \label{sec:concl}

We develop a  learning-based control algorithm for the  formation control of networked multi-agent systems with unknown nonlinear
dynamics. The algorithm integrates
distributed neural-network-based learning and adaptive control. We provide formal guarantees and perform
extensive numerical experiments. Future efforts will focus on relaxing the considered assumptions and extending the proposed methodology
to account for directed and time-varying communication graphs as well as underactuated systems.

\bibliographystyle{IEEEtran}
\bibliography{bibliography}

\appendix
\section{Appendix} \label{sec:app}

We provide here the proof of Theorem \ref{th:main}.

\begin{proof}[Proof of Theorem \ref{th:main}] \label{subsec:proof Th1}
		%We re-write first the condition of Assumption \ref{ass:nn}. 
	%We first introduce the following notation: 	%${e}_2 \coloneqq [e_{1,2}^\top,\dots,e_{N,2}]^\top$, ${e} \coloneqq [e_1^\top, {e}_2^\top]^\top$, 
	%$\hat{d}_1 \coloneqq [\hat{d}_{1,1}^\top,\dots,\hat{d}_{N,1}^\top]^\top$, $\hat{d}_2 \coloneqq [\hat{d}_{1,2}^\top,\dots,\hat{d}_{N,2}^\top]^\top$,
	%$\underline{g} \coloneqq \min_{i\in\mathcal{N}}\{ \lambda_{\min}(g_i) \}$, %${K}_1 \coloneqq \textup{diag}\{k_{1,1},\dots,k_{N,1}\} \otimes I_n$, 
	%$\bar{b}_1 \coloneqq \|I_{Nn} - \frac{1}{\underline{g}}{K}_1^2H^{-1}\|$, and $\bar{b}_2 \coloneqq \frac{1}{\underline{g}}\|H^{-1}{K}_1\|$.
	% where $\lambda_{\min}$ is the minimum-eigenvalue operator. Note that $\underline{g}$ is strictly positive due to Assumption \ref{ass:g pd}.
	% Since the control policy \eqref{eq:control law} is discontinuous, we use the notion of Filippov solutions. The Filippov regularization of $u_i$ is $\mathsf{K}[u_i] = u_{i,nn}({x}) - (k_{i,2} + \hat{d}_{i,1})e_{i,2} - \hat{d}_{i,2} \hat{E}_{i,2}$, where $\hat{E}_{i,2} \coloneqq \frac{e_{i,2}}{\|e_{i,2}\|}$ if $e_{i,2}\neq 0$ and $\hat{E}_{i,2} \in (-1,1)^n$ otherwise. Note that, in any case, it holds that $e_{i,2}^\top \hat{E}_{i,2} = \|e_{i,2}\|$, for all $i\in\mathcal{N}$.
	
	{ 
	%We first prove the boundedness of the solution of the closed-loop error system driven by the adaptive control policy \eqref{eq:control and adapt}. 
	Let the continuously differentiable function
	\begin{align} \label{eq:V_1}
		V_1  \coloneqq \frac{1}{2\underline{g}}{e}_1^\top K_1^2 H^{-1}  e_1 + \frac{1}{2\underline{g}}{e}_2^\top H^{-1}{e}_2. 
	\end{align}
	By differentiating $V_1$ and using \eqref{eq:e dynamics}, one obtains	
	\begin{align*}
		\dot{V}_1 =& -\frac{1}{\underline{g}}e_1^\top K_1^2H^{-1}K_1 e_1 + \frac{1}{\underline{g}}e_1^\top K_1^2 H^{-1} e_2 \\& + \frac{1}{\underline{g}}e_2^\top\left( f(x,t) + g(x,t)u - \ddot{\bar{x}}_{0,1} \right) \\&
		- \frac{1}{\underline{g}}e_2^\top H^{-1} K_1^2 e_1 + \frac{1}{\underline{g}}e_2^\top H^{-1} K_1 e_2 
	\end{align*}
	and by further using \eqref{eq:control law},
	\begin{align*}
	 \dot{V}_1 \leq& -\frac{1}{\underline{g}}{e}_1^\top {K}_1^2 H^{-1} K_1{e}_1 + \frac{\|H^{-1}K_1\|}{\underline{g}}\sum_{i\in\mathcal{N}}\|e_{i,2}\|^2 \\
	&  - \frac{1}{\underline{g}}\sum_{i\in\mathcal{N}}e_{i,2}^\top g_i(x_i,t)(k_{i,2} + \hat{d}_{i,1})e_{i,2} \\
	& + \frac{1}{\underline{g}}e_2^\top\left( f(x,t) + g(x,t)u_{nn}(x) - \ddot{\bar{x}}_{0,1} \right)
	\end{align*}
	%where we omit the function arguments for brevity, which further becomes
	%\begin{align*}
	%	\widetilde{W}_1 =& -\bar{e}_1^\top \bar{K}_1\bar{e}_1 + \bar{e}_1^\top B \bar{e}_2 + \frac{1}{\underline{g}}\bar{e}_2^\top H^{-1} \bar{K}_1 \bar{e}_2 + \frac{1}{\underline{g}} \bar{e}_2^\top (\bar{f} + \bar{g} \bar{u} - \dot{\bar{x}}_{0,2})
	%\end{align*}	
	By using the positive definiteness of $g_i(x_i,t)$, the fact that  $\underline{g} = \min_{i\in\mathcal{N}}\{ \lambda_{\min}(g_i) \}$, and the fact that $\hat{d}_{i,1}(t)$ is positive, $i\in\mathcal{N}$, we obtain
		\begin{align*} 
	\dot{V}_1 \leq& -\frac{1}{\underline{g}}{e}_1^\top {K}_1^2 H^{-1} K_1{e}_1 + \frac{\|H^{-1}K_1\|}{\underline{g}}\sum_{i\in\mathcal{N}}\|e_{i,2}\|^2 \notag \\
	&  - \sum_{i\in\mathcal{N}}(k_{i,2} + \hat{d}_{i,1})\|e_{i,2}\|^2 \notag \\
	& + \frac{1}{\underline{g}}e_2^\top\left( f(x,t) + g(x,t)u_{nn}(x) - \ddot{\bar{x}}_{0,1} \right) 
	\end{align*}
and in view of Assumption \ref{ass:nn}, for $\|e\| \leq r$, 
	\begin{align} \label{eq:W_1_tilde}
	\dot{V}_1 \leq& -\frac{1}{\underline{g}}{e}_1^\top {K}_1^2 H^{-1} K_1{e}_1 \notag \\
	&\hspace{-10mm}  - \sum_{i\in\mathcal{N}}\left(k_{i,2} + \hat{d}_{i,1} - \frac{\|H^{-1}K_1\|}{\underline{g}} - \frac{\kappa}{\underline{g}}\right)\|e_{i,2}\|^2  
	\end{align}
	}
	
	{
%Next, in view of the boundedness of $x(t)$ and $\ddot{x}_{0,1}(t)$, and the continuity and boundedness of $f(x,t)$ and $g(x,t)$ in $x$ and $t$, respectively,
% we conclude that $\frac{1}{\underline{g}}\|f(x(t),t)) + g(x(t),t)u_{nn}(x(t)) - \ddot{x}_{0,1}(t)\| \leq d_2$, for all $t\geq 0$ and a positive constant $d_2$. 
By further defining $d_1 \coloneqq \frac{\|H^{-1}K_1\|}{\underline{g}} + \frac{\kappa}{\underline{g}}$, \eqref{eq:W_1_tilde} becomes 
\begin{align} \label{eq:W_1_tilde 2}
	\dot{V}_1 \leq& -\frac{1}{\underline{g}}{e}_1^\top {K}_1^2 H^{-1} K_1{e}_1  - \sum_{i\in\mathcal{N}}(k_{i,2} + \hat{d}_{i,1} - d_1)\|e_{i,2}\|^2
	\end{align}
	}

	{In view of the aforementioned expression, the individual adaptation variables $\hat{d}_{i,1}$ aim to dominate the term $d_1$. Therefore, we define the adaptation errors $\widetilde{d}_1 \coloneqq [\widetilde{d}_{1,1},\dots,\widetilde{d}_{N,1}]^\top$ $\coloneqq$ $\hat{d}_1 - \bar{d}_1$ $\coloneqq$ $[\hat{d}_{1,1}-d_1,\dots,\hat{d}_{N,1}-d_1]^\top$, 
%$\widetilde{d}_2$ $\coloneqq$ $[\widetilde{d}_{1,2},\dots,\widetilde{d}_{N,2}]^\top$ $\coloneqq$ $[\widetilde{d}_{1,2}-d_2,\dots,\widetilde{d}_{N,2}-d_2]^\top$, 
and the overall state $\widetilde{x}$ $\coloneqq$ $[{e}_1^\top,{e}_2^\top,\widetilde{d}^\top_1]^\top$ $\in$ $\mathbb{R}^{N(2n+1)}$. Let  the continuously differentiable function 
	\begin{align*}
		V_2(\widetilde{x})  \coloneqq V_1(\widetilde{x}) +  \frac{1}{2}\widetilde{d}_1^\top {M}_1^{-1}\widetilde{d}_1,
	\end{align*}
where ${M}_1 \coloneqq \textup{diag}\{\mu_{1,1},\dots,\mu_{N,1}\}$.	%Note that $V_2(\widetilde{x})$ satisfies $W_{\underline{m}}(\widetilde{x}) \leq V_2(\widetilde{x}) \leq W_{\bar{m}}(\widetilde{x})$, where $W_{\underline{m}}(\widetilde{x}) \coloneqq \underline{m}\|\widetilde{x}\|^2$, $W_{\bar{m}}(\widetilde{x}) \coloneqq \bar{m}\|\widetilde{x}\|^2$ for some positive constants $\underline{m}$, $\bar{m}$. 
	By differentiating $V_2$ and using \eqref{eq:W_1_tilde 2}, we obtain  
	\begin{align*}
		\dot{V}_2 \leq & -\frac{1}{\underline{g}}{e}_1^\top {K}_1^2 H^{-1} K_1{e}_1  - \sum_{i\in\mathcal{N}}(k_{i,2} + \hat{d}_{i,1} - d_1)\|e_{i,2}\|^2  \notag \\
	 &   +  \sum_{i\in\mathcal{N}} \frac{1}{\mu_{i,1}}\widetilde{d}_{i,1} \dot{\hat{d}}_{i,1} 
	\end{align*}
	and by substituting \eqref{eq:adaptation law 1},
	\begin{align*}
		\dot{V}_2  \leq& -\frac{1}{\underline{g}}{e}_1^\top {K}_1^2 H^{-1} K_1{e}_1  - \sum_{i\in\mathcal{N}} {k}_{i,2} \|{e}_{i,2}\|^2 \leq 0
	\end{align*} 
	Therefore, $V_2(t) \leq V_2(0)$, implying the boundedness of $e_1(t)$, $e_2(t)$, and $\widetilde{d}_1(t)$, for all $t\geq 0$. In view of \eqref{eq:control and adapt},  we also conclude the boundedness of $u(t)$ and $\dot{\hat{d}}_1(t)$, for all $t\geq 0$. By differentiating $\dot{V}_2$ and using  \eqref{eq:e dynamics} and \eqref{eq:control and adapt},  we further conclude the boundedness of $\ddot{V}_2(t)$, $t\geq 0$, which implies the uniform continuity of $V_2$. By employing Barbalat's Lemma (Theorem 8.4 of \cite{khalil1996noninear}), we conclude that $\lim_{t\to\infty}e_1(t) = \lim_{t\to\infty}e_2(t)  = 0$.}
	
{
In view of Assumptions \ref{ass:g pd} and \ref{ass:nn}, the aforementioned results hold under the conditions $x\in\Omega_x \coloneqq \Omega_1 \times \dots \times \Omega_N$ and $\|e\| \leq r$. Therefore, we need to establish that the 	proposed control algorithm and initial conditions do not force $e(t)$ and $x(t)$ to exit the sets $\{e \in \mathbb{R}^{2Nn}: \|e\| \leq r\}$ and $\Omega_x$, respectively,
at any point in time $t \geq 0$. 
Alternatively, we need to establish that, for $\widetilde{x}(0) \in \bar{\Omega}$, it holds that $x(t) \in \Omega_x$ and $\|e(t)\| \leq r$, for all $t\geq 0$.  
Let the set
	\begin{align*}
		\mathcal{M} \coloneqq & \{ \widetilde{x} \in \mathbb{R}^{N(2n+1)}: V_2(\widetilde{x}) \leq V_0 \},
	\end{align*}
where we choose $V_0$ as the largest constant for which $\mathcal{M} \subseteq \{ \widetilde{x} \in \mathbb{R}^{N(2n+1)}: {x} \in \Omega_x, \|e\| \leq r, \widetilde{d}_1 \leq V_2(\widetilde{x}(0))  \}$. Then, for all $\widetilde{x}(0) \in \bar{\Omega}$, where $\bar{\Omega}\subseteq \mathcal{M}$, it follows that $V_2$ is bounded from above by $V_2(\widetilde{x}(0))$, which implies that ${x}(t) \in \Omega_x$  and $\|e(t)\| \leq r$, for all $t\geq 0$. Since $\widetilde{x}=[e^\top, \widetilde{d}_1]^\top=[e^\top, \hat{d}_1-\bar{d}_1]^\top$ and $\bar{d}_1$ is constant, $\widetilde{x}(0) \in \bar{\Omega}$ implies $[e(0)^\top,\hat{d}_1(0)^\top]^\top \in \bar{\Omega}_{\hat{x}} \coloneqq \{[e^\top,\hat{d}_1^\top]^\top \in \mathbb{R}^{N(2n+1)} :\widetilde{x} \in \Omega_x \}$, leading to the conclusion of the proof.}

\end{proof}

\end{document}